\documentclass[11pt]{article}
\usepackage{setspace}
\usepackage[utf8]{inputenc}
\usepackage{amsmath}
\usepackage{amsthm}
\usepackage{amsfonts}
\usepackage{graphicx}
\usepackage{verbatim}
\usepackage{authblk}
\usepackage[normalem]{ulem}
\usepackage{booktabs}

\usepackage[dvipsnames]{xcolor}
 \usepackage{mathptmx}     
\usepackage[titletoc,title]{appendix}
\usepackage[numbers]{natbib}
\usepackage[T1]{fontenc}
\usepackage[all]{xy}
\usepackage{enumerate}
\usepackage{comment}
\usepackage[british]{babel}
\usepackage{a4wide}
\usepackage[sans]{dsfont}
\usepackage[mathscr]{euscript}
\usepackage{mathrsfs}
\usepackage{latexsym}

\newtheorem{theorem}{Theorem}[section]

\newtheorem{proposition}{Proposition}[section]
\theoremstyle{definition}
\newtheorem{definition}{Definition}[section]
\theoremstyle{remark}
\newtheorem{remark}{Remark}[section]
\theoremstyle{remark}
\newtheorem{example}{Example}[section]
\theoremstyle{remark}
\newtheorem{assumption}{Assumption}[section]

\numberwithin{equation}{section} \makeatletter

\usepackage{array}
\usepackage{xcolor}
\usepackage{url}
\usepackage{hyperref}
\usepackage{graphicx}
\usepackage{enumerate} 
\usepackage{empheq}
\usepackage{bbm}
\usepackage[margin=1in]{geometry}

	\newcommand{\rmd}{\mathrm{d}}

\begin{document}

\title{Three options are roughly enough \\--\\
Variance-optimal semi-static hedging in the rough Heston model}

\author[1]{Paolo Di Tella\thanks{\texttt{paolo.di\_tella@tu-dresden.de}}}
\author[1]{Martin Keller-Ressel\thanks{\texttt{martin.keller-ressel@tu-dresden.de}}}
\author[1]{Beatrice Ongarato\thanks{\texttt{beatrice.ongarato@tu-dresden.de}}}

\affil[1]{Institute of Mathematical Stochastics, TU Dresden, Dresden, Germany}
\date{\today}

\maketitle
\begin{abstract}
  This paper studies semi-static variance-optimal hedging of variance swaps in the rough Heston model under a risk-neutral measure. 
We combine dynamic trading in the underlying asset with static positions in a finite basket of European options, with the aim of minimizing the variance of the residual hedging error. 
Exploiting the Markovian lift provided by the forward variance curve, the Fourier representations of the claims and the Laplace transform characterization of the log-price, we derive semi-explicit formulas for the optimal strategies and the hedging error.  We then study the optimal claim selection problem, investigating numerically how the choice of static instruments depends on the Hurst index.  Under standard parametrization, we discover that "three options are roughly enough", i.e. adding three static vanilla options reduces the hedging error by more than 95\% relative to the purely dynamic hedge for all considered roughness regimes $H\in \{0.1, 0.3, 0.5\}$.

\end{abstract}

\section{Introduction}

Hedging strategies are typically classified as dynamic or static. 
In a dynamic strategy, the position in the underlying asset is continuously adjusted over time to reduce the portfolio's exposure to market fluctuations. 
This flexibility comes at the cost of continuous monitoring and potentially high transaction costs. 
By contrast, a static strategy is fixed at inception and maintained until maturity. Although static hedging is operationally simpler, it generally 
does not lead to perfect hedging, even in complete markets.

Semi-static hedging combines the two approaches. 
It consists of a dynamic trading strategy in the underlying asset together with static positions in additional traded instruments, such as European options. 
This structure is especially relevant in markets with high transaction costs or where some instruments are not liquid enough for continuous trading. 
Moreover, since dynamic hedging is nested within the semi-static framework, adding static instruments reduces the hedging error relative to a purely dynamic strategy.

Furthermore, \cite{neuberger1994log} shows that the payoff of a variance swap admits an exact semi-static replication, obtained by combining a dynamic position in the underlying with a static position in a continuum of vanilla options across all strikes. The replication is model-free, provided that the price
 is a continuous martingale and that the realized variance is identified with the quadratic variation.
This result has two main implications. 
First, it illustrates that semi-static hedging can lead to exact replication for certain claims, even in incomplete markets. 
Second, it identifies vanilla options as natural instruments for hedging variance risk.

 In the literature, semi-static hedging has been studied from several perspectives. 
Some works focus on exact replication, as in \cite{neuberger1994log}, while others develop model-free approaches; see, e.g., \cite{beiglbock2013model,acciaio2017semi,acciaio2017space,nutz2023limits}. 
In this paper, we consider semi-static hedging under a variance-optimal criterion following the framework adopted by \cite{ditella2019semistatic,ditella2020semistatic}. Variance-optimality, introduced in  \cite{schweizer1984varianten,follmer1986hedging} to hedge risk in incomplete markets, identifies  the strategy that minimizes the variance of the residual hedging error. In a risk-neutral measure setting, variance-optimal strategies are obtained through the Galtchouk-Kunita-Watanabe decomposition, see \cite{kunita1967square}. 
For details on how to extend the solution under the physical measure, we refer to \cite{rheinlander2011hedging}.

The semi-static version of the variance-optimal hedging problem, developed in 
\cite{ditella2019semistatic,ditella2020semistatic}, is formulated under a risk-neutral measure and combines dynamic trading in the underlying asset with static positions in a finite basket of contingent claims. 
By working with a finite number of contingent claims, this formulation incorporates the practical constraint that static positions can only be taken in finitely many instruments, rather than in a continuum of options as in idealized exact replication results. 
For a fixed basket, the framework determines the optimal dynamic strategy and the optimal static weights, together with the size of the hedging error. 
The remaining question is how to choose the static instruments themselves. 
This gives rise to the optimal claim selection problem, namely the identification of the subset of available options of given cardinality, which yields the most effective semi-static hedge.

This paper focuses on the semi-static variance-optimal hedging of variance swaps in the rough Heston model, with particular attention to the numerical computation of the hedging strategies and to the selection of the static hedging instruments. 
Before stating our contributions in detail, we briefly review the main works in the literature that are most relevant to our analysis.

As already mentioned, our analysis builds on the semi-static variance-optimal framework developed in 
\cite{ditella2019semistatic}. 
Using Fourier techniques in the spirit of \cite{kallsen2010variance}, the authors combine the Galtchouk--Kunita--Watanabe decomposition with Fourier representations of the claims to obtain semi-explicit formulas for the optimal strategies. 
Their framework covers several factor models, with particular emphasis on affine models, where Laplace transforms are characterized through generalized Riccati equations \cite{duffie2003affine}. Their result, moreover, remains valid also beyond the semimartingale assumption.

The work \cite{ditella2020semistatic} is the closest benchmark for the present paper. 
There, the authors focus on the semi-static variance-optimal approach to the Heston model under a risk-neutral measure. 
They study the hedging of variance swaps, compute the hedging error numerically, and discuss the optimal selection of static instruments. 
Their results show how the use of a finite basket of options reduces the residual hedging error and how the composition of the basket affects the hedging performance. 
Our paper follows the same framework, extending the analysis from the classical Heston model to the rough Heston model.

Further contributions have extended the semi-static variance-optimal methodology in related directions. 
In \cite{callegaro2026semistatic}, the problem is studied in an affine Hawkes--Heston model under a risk-neutral measure. 
The authors identify semi-static variance-optimal strategies for variance swaps and compute the corresponding hedging error in a discontinuous market with self-exciting jumps. 
Recent work \cite{chatziandreou2026semi} extends the semi-static variance-optimal framework to a multi-asset setting, particularly focusing on covariance risk and optimal claim selection in the Wishart model. In \cite{chatziandreou2026pricing}, the same authors apply multi-asset semi-static variance-optimal hedging analysis to Pay-as-produced Power
Purchase Agreements. 

Rough volatility models provide a natural setting to extend the analysis on semi-static variance-optimal hedging. 
It is now well established that semimartingale volatility models do not fully capture several empirical features of financial markets, especially the dynamics of historical volatility and the shape of the implied volatility surface; see \cite{gatheral2022volatility}. 
Rough volatility models address these limitations by introducing memory effects through Volterra-type dynamics in the volatility. 
Prominent examples include the rough Bergomi model \cite{bayer2016pricing}, the rough Heston model \cite{el2019characteristic}, and the broader class of rough affine models \cite{larsson2023chapter}, possibly with jumps, see e.g. \cite{bondi2024rough}. 
In rough volatility models, the stock price remains a semimartingale, while the volatility process is typically non-Markovian and non-semimartingale.

The rough Heston model is particularly suitable for the present study. 
On the one hand, it incorporates rough volatility through a Volterra kernel, allowing the model to reproduce important empirical features of volatility. 
On the other hand, it retains an affine-type transform structure: its characteristic function is characterized by a Riccati--Volterra equation, as shown in \cite{el2019characteristic}. 
This makes the model compatible with Fourier-based methods and therefore with the semi-static variance-optimal techniques developed in \cite{ditella2019semistatic}. Moreover, the semimartingality of the model can be reconstructed through a Markovian lift, written in terms of the forward variance curve.

Semi-static variance-optimal hedging in the rough Heston model has been studied in \cite{martins2023aspects}, where semi-explicit formulas are derived under the physical measure for a general class of payoffs. Explicit computations, however, are provided only for call options. Their approach remains purely theoretical: no numerical implementation is carried out, and the magnitude of the residual hedging error is not assessed. Moreover, working under the physical measure makes the hedging formulas considerably more involved, whereas a risk-neutral setting allows for a more direct derivation. We also note that \cite{martins2023aspects} does not address the optimal claim selection problem.
More recently, \cite{wang2026variance} studies the fully dynamic variance-optimal hedging problem under a risk-neutral measure in the rough Hawkes--Heston model of \cite{bondi2024rough}. 
The authors compute the optimal hedge and numerically analyze the convergence of dynamic hedging strategies obtained from kernel approximations.

Our paper contributes to the existing literature by deriving semi-static variance-optimal hedging strategies for variance swaps
in the rough Heston model. The proposed framework combines dynamic trading in the underlying asset with
static positions in a finite basket of European options. We identify the corresponding hedging strategies and
the optimal static weights, and we verify the moment, measurability and integrability conditions required to
apply the Fourier-GKW interchange results in this non-Markovian rough volatility setting. We then develop
a numerical implementation based on a multifactor approximation of the rough kernel. This allows us to
compute the strategies, solve the optimal claim selection problem for different basket sizes, and evaluate the
resulting residual hedging errors. Finally, we analyze the sensitivity of both the dynamic and static compon
ents of the hedge with respect to the Hurst index, thereby quantifying the impact of volatility roughness on
the performance and composition of the semi-static hedge.

The paper is organized as follows. 
Section~\ref{sec:semistatic-hedging} introduces the semi-static variance-optimal hedging problem, and its solution. Then, it explains how it can be combined with Fourier techniques. Finally, the section presents the optimal selection problem. In Section~\ref{sec:rough-heston}, we introduce the rough Heston model under a risk-neutral measure, and recall the formulas for the Laplace transform of the log-price.
Section~\ref{sec:variance-swap} defines variance swaps and computes the optimal hedging strategies. 
Section~\ref{sec:numerical-results} reports the numerical results, including the explanation on the numerical methods employed for the computation of the numerical error. We also study the sensitivity of both static and dynamic optimal strategies to the Hurst index. 
Section~\ref{sec:conclusions} concludes.

\section{Semi-static variance-optimal hedging} \label{sec:semistatic-hedging}
This section provides an introduction to the semi-static variance-optimal hedging problem. 
We first present a rigorous formulation of the problem and recall the solution techniques developed in \cite{ditella2019semistatic,ditella2020semistatic}. 
In Section~\ref{sec:Fourier}, referring to \cite{ditella2019semistatic}, we then explain how Fourier representations of the contingent claims can be used to make the problem more analytically tractable. Lastly, in Section~\ref{sec:optimal-selection}, we present the optimal selection problem.

\paragraph{Notation.}
Let $(\Omega, \mathcal{F}, \mathbb P)$ be filtered a probability space, and $T>0$ a finite horizon. Since our analysis relies on Fourier representations of contingent claims, we use
complex-valued processes. Therefore, we adopt the following notation throughout the paper.
For $u\in\mathbb C^n$, $\overline u$ denotes its complex conjugate and $\operatorname{Re} u$ its real part. 
We denote by $\mathcal H_{\mathbb C}^2$ and $\mathcal H^2$ the Hilbert spaces of càdlàg, square-integrable, $\mathbb F$-adapted martingales, respectively complex-valued and real-valued. 
For $X,Y\in\mathcal H^2$, $\langle X,Y\rangle$ denotes their predictable covariation; in particular, $X$ and $Y$ are orthogonal if $X_0Y_0=0$ and $\langle X,Y\rangle=0$. 
If $Z^j=X^j+iY^j\in\mathcal H_{\mathbb C}^2$, $j=1,2$, we set
\[
    \langle Z^1,Z^2\rangle
    :=
    \bigl(\langle X^1,X^2\rangle-\langle Y^1,Y^2\rangle\bigr)
    +i\bigl(\langle X^1,Y^2\rangle+\langle Y^1,X^2\rangle\bigr).
\]
Finally, for $X\in\mathcal H_{\mathbb C}^2$, we define
\[
    L_{\mathbb C}^2(X)
    :=
    \left\{
    \vartheta \text{ predictable and complex-valued}:
    \mathbb E\left[
    \int_0^T |\vartheta_s|^2\,\mathrm d\langle X,\overline X\rangle_s
    \right]<\infty
    \right\},
\]
and denote by $L^2(X)$ the corresponding space of real-valued integrands when $X\in\mathcal H^2$.
\vspace{2em}

Let $S$ be the price process of a tradable asset and assume $S \in \mathcal{H}^2$. Consider a target claim $H^0 \in \mathcal{H}^2$, and a fixed basket of contingent claims $\pmb{H}:=(H^1,\dots,H^d)^\top$, $H^j \in \mathcal{H}^2$ for  $j = 1,\dots,d$. Identifying a semi-static variance-optimal hedge for $H^0$ consists in solving the optimization problem stated in Definition~\ref{definition semistatic variance optimal hedging}, first introduced in this form by \cite{ditella2019semistatic}.  

\begin{definition} \label{definition semistatic variance optimal hedging}
   A \emph{semi-static variance-optimal hedge} for the square-integrable claim $H^0$ is a triplet
$(c^\star,\vartheta^\star,\pmb{\nu}^\star)$, where $c^\star\in\mathbb{R}$ is the initial capital, 
$\vartheta^\star\in L^2(S)$ is the dynamic strategy in the underlying asset, and 
$\pmb{\nu}^\star\in\mathbb{R}^d$ denotes the static positions in the available claims, 
which attains the minimum
\begin{equation}\label{eq:semistatic-hedging-problem}
    \varepsilon^2
    =
    \min_{c \in \mathbb{R}, \vartheta \in L^2(S), \pmb{\nu}\in \mathbb{R}^d}
    \mathbb{E}
    \left[
        \left(
        c-\mathbb{E}[\pmb{\nu}^\top \pmb{H}_T]
        +\int_0^T \vartheta_s\, \mathrm{d}S_s
        -
        \bigl(H_T^0-\pmb{\nu}^\top \pmb{H}_T\bigr)
        \right)^2
    \right].
\end{equation}
\end{definition}

The approach of \cite{ditella2019semistatic,ditella2020semistatic} shows that 
problem~\eqref{eq:semistatic-hedging-problem} separates into: an inner dynamic 
variance-optimal hedging problem for the residual claim 
$H^0-\pmb{\nu}^{\top}\pmb{H}$, for fixed 
$\pmb{\nu}$; an outer finite-dimensional minimization over the static positions:
\begin{align}&\varepsilon^2(\pmb{\nu}) = \min_{c\in \mathbb{R}, \vartheta \in L^2(S)} \mathbb{E} \left[\left(c-\mathbb{E}[\pmb{\nu}^\top \pmb{H}_T]+\int_0^T \vartheta_s\, \rmd S_s -(H^0_T-\pmb{\nu}^\top \pmb{H}_T)\right)^2\right] \label{eq:inner-problem}\\
&\varepsilon^2=\min_{\pmb{\nu}\in \mathbb{R}^d} \varepsilon^2(\pmb{\nu}). \label{eq:outer-problem}
\end{align}

We refer to \eqref{eq:inner-problem} and \eqref{eq:outer-problem} as the inner and outer problems, respectively. Being a fully dynamic variance-optimal hedging problem, the solution of the inner problem is obtained through the 
Galtchouk--Kunita--Watanabe decomposition, introduced in \cite{kunita1967square}, which we recall below.

\begin{definition}[Galtchouk--Kunita--Watanabe decomposition]\label{def:GKW}
Let $H,S\in\mathcal{H}_{\mathbb{C}}^2$. 
The Galtchouk--Kunita--Watanabe decomposition of $H$ with respect to $S$ is the representation
\[
    H
    =
    H_0+\int_0^\cdot \vartheta_s\,\mathrm dS_s+L,
\]
where $\vartheta\in L_{\mathbb{C}}^2(S)$ and $L\in\mathcal{H}_{\mathbb{C}}^2$ is a martingale with $L_0=0$ such that $\langle S,\overline{L}\rangle=0.$
The pair $(\vartheta,L)$ is unique. 
The process $\vartheta$ represents the hedgeable component obtained by trading in $S$, while $L$ is the residual, non-hedgeable risk. 
Note that the integrand is given by
 $\vartheta
    =
    \frac{\mathrm d\langle H,\overline{S}\rangle}
         {\mathrm d\langle S,\overline{S}\rangle}.$
\end{definition}
By the standing assumption $H^j\in\mathcal{H}^2$, $j=0,\dots,d$, the GKW decomposition of each $H^j$ with respect to $S$ is well defined. 
Hence, for a fixed static position 
$\pmb{\nu}=(\nu_1,\nu_2,\dots,\nu_d)$, the solution of \eqref{eq:inner-problem} is the pair 
$(\vartheta^{\pmb{\nu},*},c^*)$ given by
\[
    \vartheta^{\pmb{\nu},*}
    =
    \vartheta^0-\sum_{j=1}^d \nu_j \vartheta^j,
    \qquad
    c^*=\mathbb{E}[H^0].
\]
Here, $(\vartheta^j,L^j)$ denotes the GKW decomposition of $H^j$ with respect to $S$, for $j=0,\dots,d$.
 By \cite[Theorem 2.3]{ditella2020semistatic}, once the inner problem has been solved for a fixed static position 
$\pmb{\nu}$, the corresponding hedging error is a quadratic function of $\pmb{\nu}$:
\begin{equation*}\label{eq:error-nu}
    \varepsilon^2(\pmb{\nu})
    =
    A-2\pmb{\nu}^\top B+\pmb{\nu}^\top C\pmb{\nu}.
\end{equation*}
Here $A\in\mathbb{R}$, $B\in\mathbb{R}^d$, and $C\in\mathbb{R}^{d\times d}$ are determined by the predictable covariations of the residual martingales:
\begin{equation}\label{eq:ABC}
    A:=\mathbb{E}\!\left[\langle L^0,L^0\rangle_T\right],
    \qquad
    B^j:=\mathbb{E}\!\left[\langle L^0,L^j\rangle_T\right],
    \qquad
    C^{ij}:=\mathbb{E}\!\left[\langle L^i,L^j\rangle_T\right],
    \quad i,j=1,\dots,d.
\end{equation}
Thus, the outer problem consists in minimizing the quadratic function 
$\varepsilon^2(\pmb{\nu})$ with respect to the static position 
$\boldsymbol{\nu}\in\mathbb{R}^d$. If $C$ is invertible, then the optimal static strategy is given by $\pmb{\nu}^*=C^{-1}B$.  Consequently, the semi-static variance-optimal hedge for 
problem~\eqref{eq:semistatic-hedging-problem} is given by
\begin{equation*}\label{eq:optimals}
    \pmb{\nu}^\star=C^{-1}B,
    \qquad
    \vartheta^\star
    =
    \vartheta^0-\sum_{j=1}^d \nu_j^\star\vartheta^j,
    \qquad
    c^\star=\mathbb{E}[H_T^0].
\end{equation*}
The corresponding minimal squared hedging error is
\begin{equation*}\label{eq:hedging-error}
    \varepsilon^2
    =
    \varepsilon^2(\pmb{\nu}^\star)
    =
    A-B^\top C^{-1}B.
\end{equation*}

\begin{remark}
As recalled in \cite{ditella2019semistatic}, if $C$ is not invertible, the optimal static positions are the solutions of the linear system $C\pmb{\nu}=B$.
Among them, the solution which minimizes the Euclidean norm is 
$\pmb{\nu}^\star=C^\dagger B$, where $C^\dagger$ denotes the Moore--Penrose pseudo-inverse of $C$. 
\end{remark}

\subsection{Fourier representation} \label{sec:Fourier}

Variance-optimal hedging can often be made tractable for Fourier-representable payoffs 
by combining the transform representation of the payoff with the conditional transform 
of the underlying model; see, for instance, \cite{kallsen2010variance}. 
This approach is particularly meaningful in models admitting semi-explicit formula for the Laplace transform. 
Affine models provide a classical example, since their characteristic functions can be written in terms of  generalized Riccati equations, see \cite{duffie2003affine}. 
A similar transform structure is available in rough affine models, possibly with jumps, 
where the corresponding transforms are described by Riccati--Volterra equations, see \cite{larsson2023chapter, bondi2024rough}. 
We refer to \cite{ditella2019semistatic} for an overview on Fourier representations applied to semi-static 
variance-optimal hedging, and we summarize the main ideas below.
\vspace{1em}

Assume that $S\in\mathcal{H}^2$ is strictly positive and set $X=\log S$. 
Let $H_T=h(X_T)$ be a square-integrable European payoff, $h:\mathbb{R}\to\mathbb{R}$. 
Suppose that the two-sided Laplace transform of $h$, denoted by $\widehat h$, is well defined and integrable along the strip $\mathcal{S}(R):=\{u\in\mathbb{C}:\operatorname{Re} u =R\}$,
for some $R\in\mathbb{R}$. 
Then $h$ admits the Fourier--Laplace representation
\begin{equation}\label{eq:fourier-repr}
    h(z)
    =
    \frac{1}{2\pi i}
    \int_{R-i\infty}^{R+i\infty}
        e^{uz}\widehat h(u)\,\mathrm du
    =
    \int_{\mathcal{S}(R)} e^{uz}\,\zeta(\rmd u),
\end{equation}
where $\zeta$ is the finite complex measure on $\mathcal{S}(R)$ defined by
\begin{equation}\label{eq:measure-repr}
    \zeta(\rmd u)
    :=
    \frac{1}{2\pi i}\widehat h(u)\,\rmd u.
\end{equation}

If, in addition, $\mathbb{E}[e^{R X_T}]<\infty$, the risk-neutral price of the claim at time $t$ can be written as
\begin{equation}\label{eq:fourier-repr-claim}
    H_t
    =
    \int_{\mathcal{S}(R)}
        \widetilde H_t^u\,\zeta(\mathrm du),
    \qquad t\in[0,T],
\end{equation}
with
\begin{equation*}\label{eq:conditional-laplace}
    \widetilde H_t^u
    :=
    \mathbb{E}\!\left[e^{uX_T}\mid\mathcal{F}_t\right].
\end{equation*}

\begin{remark}\label{rem:call-put-fourier}
For a European call option with strike $K>0$, corresponding to 
$h(z)=(e^z-K)^+$, the transform is
\begin{equation*}\label{eq:h-hat-call}
    \widehat h(u)
    =
    \frac{K^{1-u}}{u(u-1)},
\end{equation*}
for $R>1$. 
The same expression applies to a European put payoff 
$h(z)=(K-e^z)^+$, but with $R<0$. 
We refer to \cite[Section 4]{hubalek2006variance} for further details.
\end{remark}

The main result in \cite[Theorem 4.1, Theorem 4.2]{ditella2019semistatic} states that, under suitable conditions, the Fourier representation of the claim in equation~\eqref{eq:fourier-repr-claim}  and the GKW decomposition of the claim with respect to $S$ can be interchanged. 

\subsection{Optimal selection}\label{sec:optimal-selection}

In \cite{ditella2020semistatic}, the authors consider the problem of optimally selecting contingent claims in the Heston model. This problem is naturally connected to semi-static variance-optimal hedging: while \eqref{eq:semistatic-hedging-problem} determines the optimal semi-static strategy for a fixed set of available claims, one may further ask how many claims are needed to achieve a prescribed reduction of the hedging error, and which claims should be selected. As in \cite{ditella2020semistatic}, this leads to the constrained problem
\begin{equation*}
\begin{aligned} 
\varepsilon^2
    &=
    \min_{\pmb{\nu}\in \mathbb{R}^d} 
    A-2\pmb{\nu}^\top B+\pmb{\nu}^\top C\pmb{\nu} \\
    \|\pmb{\nu}\|_0 
    &:=
    \# \left\{ i \in \{1,\dots,d\} : \nu_i \neq 0 \right\}
    \leq m,
\end{aligned}
\label{eq:optimal-selection-problem}
\end{equation*}
where the number of non-zero static positions is required to be at most $m$. This sparsity constraint only modifies the outer optimization problem, namely the choice of the static instruments. A direct way to solve this problem is by brute-force search, namely by evaluating the hedging error for every admissible subset of cardinality $m$ and then selecting the one with the smallest error. This procedure is exact, but its computational cost grows combinatorially with $d$, since it requires the evaluation of $\binom{d}{m}$ subsets for each fixed $m$.

For large baskets, brute force search becomes infeasible. In this case, sparse selection methods such as LASSO, greedy forward selection, and Leaps-and-Bounds are possible alternatives. Leaps-and-Bounds solves the subset selection problem exactly, but its runtime may be sensitive to the model parameters. LASSO is computationally efficient, but can be unstable when the covariance matrix $C$ is ill-conditioned, especially in the presence of highly correlated claims. Greedy forward selection is only an approximation, since it constructs the subset sequentially and does not guarantee global optimality. However, it is of simple implementation, fast, and typically provides hedging errors close to the exact optimum. It therefore offers a good trade-off between implementation cost and accuracy. Refer to \cite{ditella2019semistatic} for a comprehensive comparison. In this paper, we consider a basket of $d=21$ contingent claims. For this dimension, brute-force search is still computationally feasible.

\section{The rough Heston model} \label{sec:rough-heston}

The rough Heston model, introduced in \cite{el2019characteristic}, is a rough-volatility extension of the classical Heston model. 
On the fixed time horizon $[0,T]$, under a risk-neutral measure, we set $X=\log S$. 
The log-price process $X$ and the volatility process $V$ are weak solutions, unique in law,
to the following Volterra stochastic system:
\begin{subequations}\label{system:model}
\begin{empheq}[left=\empheqlbrace]{alignat=2}  
    \rmd X_t 
    &= -\frac{1}{2}V_t\,\rmd t+\sqrt{V_t}\,\rmd W_t, 
    &\qquad X_0\in\mathbb R, 
    \label{eq:model-X}\\
    V_t 
    &= V_0+\int_0^t K(t-s)
    \left(\lambda(\theta-V_s)\,\rmd s+\sigma\sqrt{V_s}\,\rmd B_s\right),
    \label{eq:model-V}
\end{empheq}
\end{subequations}
for $t\in[0,T]$. 
Here $W$ and $B$ are Brownian motions with constant instantaneous correlation 
$\rho\in(-1,1)$, that is, $\rmd\langle W,B\rangle_t=\rho\,\rmd t.$
The parameters $X_0\in\mathbb R$, $V_0>0$, $\lambda>0$, $\theta>0$, $\sigma>0$, are constants. 
The kernel $K:(0,\infty)\to\mathbb R_+$ is assumed to satisfy the following standard conditions; see \cite[Equation (5.6)]{larsson2023chapter}:
$K$ is strictly positive and completely monotone. 
Moreover, there exists $\gamma\in(0,2]$ such that
\begin{equation*} \label{eq: gamma}
    \int_0^h K(t)^2\,\rmd t = O(h^\gamma),
    \qquad
    \int_0^T \bigl(K(t+h)-K(t)\bigr)^2\,\rmd t = O(h^\gamma),
\end{equation*}
for every $T<\infty$.

If the kernel satisfies the conditions above, then weak existence and uniqueness in law of the solution to 
\eqref{system:model}, with $V_t\geq 0$ for all $t\in[0,T]$ a.s., follow from \cite[Theorems 3.6 and 7.1]{abi2019affine}. 

Recalling $S=e^X$, the dynamics of $S$ is given by:

\begin{equation} \label{eq:price-S}
  \rmd S_t=S_t\sqrt{V_t}\,\rmd W_t, \qquad S_0=e^{X_0}.  
\end{equation}

In general, the rough volatility process $V$ is neither Markov nor a semimartingale. The Markov property can be recovered by introducing the so-called \emph{forward variance curve}, which can be interpreted as an infinite-dimensional Markovian lift of the volatility process, in the spirit of \cite{cuchiero2020generalized}. We define the forward variance curve as $(\xi_t(u))_{0\leq t \leq u \leq T}$
\begin{equation} \label{forward var}
    \xi_t(u)=\mathbb{E}[V_u\vert \mathcal{F}_t].
\end{equation}

{\color{black} We recall the dynamics of $(\xi_t(u))_{0\leq t \leq u \leq T}$. To this aim, we introduce the definition of resolvent of the second kind of the kernel, referring to \cite[Chapter 2, Theorem 3.1]{gripenberg1990volterra}.}
    
\begin{definition}
    For $f, g \in L^1_{\text{loc}}(\mathbb{R}_+; \mathbb{C})$, we define the convolution $f*g\in L^1_{\text{loc}}(\mathbb{R}_+; \mathbb{C})$ by
\[(f*g)(t):=\int_0^t f(t-s) g(s) \rmd s = \int_0^t f(s) g(t-s) \rmd s, \quad t\geq 0.\]
\end{definition}
\begin{definition}
 Given $K\in L^2_{\text{loc}}(\mathbb{R_+})$, we define $R\in L^2_{\text{loc}}(\mathbb{R_+})$, the resolvent of the second kind of $K$, as the unique solution to
    \[K(t)-R(t)= (R * K) (t).\]
\end{definition}
\begin{example} \label{fractional kernel}
In the fractional kernel case
\begin{align*}
    K(t)=c\frac{t^{H-\frac{1}{2}}}{\Gamma(H+1/2)}, \qquad R(t)=c t^{H-\frac{1}{2}} E_{H+1/2, H+1/2}(-ct^{H+\frac{1}{2}}), \qquad H\in (0, 1/2)
\end{align*}
where $H$ denotes the Hurst index, and  $E_{\alpha,\beta}(z)=\sum_{n=0}^\infty \frac{z^n}{\Gamma(\alpha n +\beta)}$ 
the Mittag–Leffler function, \cite{wiman1905fundamentalsatz}. Other examples of the resolvent of the second kind for kernels which satisfy standard assumptions are provided in \cite[Table 1]{abi2019affine}. Note that the classical Heston case can be reconstructed by taking a constant kernel, for which the resolvent of the second kind becomes the exponential function.
\end{example}
 The following proposition identifies the dynamics for the forward variance curve as studied in \cite[Lemma I.2.10]{martins2023aspects} and \cite[Proposition 5.4]{larsson2023chapter}.  Let $R_\lambda$ be the resolvent of the second kind of $\lambda K$, we define $\widehat K:=\frac{\sigma}{\lambda} R_{\lambda}.$
\begin{proposition}
The forward variance $(\xi_t(T))_{0\leq t\leq T}$ satisfies
    \begin{equation*}\label{forward variance sde}\rmd\xi_t(u)=\widehat K (u-t)\sqrt{V_t} \rmd B_t, \quad 0\leq t \leq u,\end{equation*}
    with initial condition\[\xi_0(u)=V_0 + \frac{\lambda(\theta-V_0)}{\sigma}\int_0^u \widehat K (s) \rmd s, \quad \xi_u(u)=V_u,\]
    for every $u\in [0,T]$.
\end{proposition}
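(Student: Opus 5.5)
We derive the forward variance dynamics by solving the linear Volterra equation for $V$ through the resolvent, and then conditioning. Rewrite \eqref{eq:model-V} as
\[
V_t=V_0+\int_0^t K(t-s)\bigl(\lambda(\theta-V_s)\,\rmd s+\rmd Z_s\bigr),\qquad Z_t:=\int_0^t\sigma\sqrt{V_s}\,\rmd B_s .
\]
This is a linear Volterra equation in $V$ with kernel $-\lambda K$ acting on $V$. Apply the standard variation-of-constants formula with the resolvent $R_\lambda$ of $\lambda K$, i.e. $\lambda K - R_\lambda = R_\lambda*(\lambda K)$; see \cite[Chapter 2]{gripenberg1990volterra} and \cite[Lemma 2.5]{abi2019affine}. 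This gives
\[
V_t=\Bigl(1-\int_0^t R_\lambda(s)\,\rmd s\Bigr)V_0+\theta\int_0^t R_\lambda(s)\,\rmd s+\frac{1}{\lambda}\int_0^t R_\lambda(t-s)\,\rmd Z_s .
\]
Writing $\widehat K=\frac{\sigma}{\lambda}R_\lambda$, this reads
\[
V_t=V_0+\frac{\lambda(\theta-V_0)}{\sigma}\int_0^t\widehat K(s)\,\rmd s+\int_0^t \widehat K(t-s)\sqrt{V_s}\,\rmd B_s .
\]
To justify the formula, plug the candidate into the equation, use the resolvent identity, and apply the stochastic Fubini theorem to the double integral $\int_0^t K(t-r)\int_0^r R_\lambda(r-s)\,\rmd Z_s\,\rmd r$.

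Next fix $u\in[0,T]$ and take conditional expectations given $\mathcal F_t$ for $t\le u$. The deterministic part is unchanged. For the stochastic integral, the integrand $s\mapsto \widehat K(u-s)\sqrt{V_s}$ is a deterministic function in the $u$-slot times an adapted process. Hence
\[
\mathbb E\Bigl[\int_0^u \widehat K(u-s)\sqrt{V_s}\,\rmd B_s\,\Big|\,\mathcal F_t\Bigr]=\int_0^t \widehat K(u-s)\sqrt{V_s}\,\rmd B_s ,
\]
provided $\int_0^u\widehat K(u-s)^2\,\mathbb E[V_s]\,\rmd s<\infty$, which makes the integral a true $L^2$ martingale in $t$. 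This yields
\[
\xi_t(u)=\xi_0(u)+\int_0^t \widehat K(u-s)\sqrt{V_s}\,\rmd B_s ,
\]
with the stated $\xi_0(u)$. In differential form in $t$ this is exactly $\rmd\xi_t(u)=\widehat K(u-t)\sqrt{V_t}\,\rmd B_t$. At $t=u$ the conditional expectation is trivial, so $\xi_u(u)=V_u$.

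\textbf{Main obstacle: integrability.} The rest is routine; the delicate part is the integrability needed for Fubini and for the martingale property. Two facts are needed:
\begin{itemize}
\item $\widehat K\in L^2_{\mathrm{loc}}$. This follows since $R_\lambda\in L^2_{\mathrm{loc}}$ whenever $K\in L^2_{\mathrm{loc}}$.
\item $\sup_{s\le T}\mathbb E[V_s]<\infty$. I would obtain this from the moment bounds for Volterra square-root processes in \cite[Lemma 3.1]{abi2019affine}. Alternatively, take expectations in the resolvent representation after localization, which gives $\mathbb E[V_s]$ bounded by the deterministic part, and use Fatou.
\end{itemize}
With these bounds, the stochastic Fubini theorem applies, since it requires $\int_0^t\bigl(\int K(t-r)R_\lambda(r-s)\,\rmd r\bigr)^2\mathbb E[V_s]\,\rmd s<\infty$, and this follows from Young's inequality.
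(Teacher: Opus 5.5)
Your argument is correct and is essentially the standard one behind the results the paper cites in place of a proof (\cite[Lemma I.2.10]{martins2023aspects}, \cite[Proposition 5.4]{larsson2023chapter}). You solve the linear Volterra equation for $V$ with the resolvent $R_\lambda$ via \cite[Lemma 2.5]{abi2019affine}, then condition on $\mathcal F_t$, using that $t\mapsto\int_0^t\widehat K(u-s)\sqrt{V_s}\,\rmd B_s$ is a true martingale on $[0,u]$ because $\widehat K\in L^2_{\mathrm{loc}}$ and $\sup_{s\le T}\mathbb E[V_s]<\infty$.

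For that moment bound, rely on \cite[Lemma 3.1]{abi2019affine} rather than your localization-and-Fatou alternative, which is not immediate. The process $t\mapsto\int_0^t\widehat K(t-s)\sqrt{V_s}\,\rmd B_s$ is not a local martingale in $t$, and Fatou would need a lower bound on the stopped terms that you have not supplied.
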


Moreover, since the rough Heston model is a rough affine model, we recall the formulas for its Laplace transform, see \cite[Example 7.2]{abi2019affine}, \cite[Theorem 5.5]{larsson2023chapter}, \cite[Corollary I.2.19]{martins2023aspects}.

\begin{definition} \label{explosion time}
    Let $T^*(a):=\sup\{ t\geq 0: \mathbb{E}[e^{a X_t}]<\infty\}$, $a\in \mathbb{R}$, be the moment explosion time of rough Heston. 
\end{definition}

\begin{proposition} \label{conditional expectation}
Assume $u\in \mathbb{C}$, and $T<T^*(\operatorname{Re}u)$, then for every $t\in [0,T]$, it holds
\begin{align}
    \widetilde H^u_t = \mathbb{E}[e^{uX_T}\vert \mathcal{F}_t] = \exp\left( u X_t + \int_t^T g_{u}(T-s)\xi_t(s) \rmd s\right) \label{eq:H-tilde}
\end{align}
    where $g_{u} \in L^1([0,T]; \mathbb{C})$ is the unique solution to the equation
    \begin{align} \label{g 1u}
       g_{u}(t)=\frac{1}{2}(u^2-u) + \rho u (\widehat K * g_{u})(t)+\frac{1}{2} \left( (\widehat K * g_{u})(t)\right)^2
    \end{align}

for $0\leq t \leq T$. Moreover, $g_{u}$ is continuous in $[0,T]$.
\end{proposition}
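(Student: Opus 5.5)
The plan is the standard affine-transform argument: guess the exponential-affine form, show it is a local martingale, then upgrade to a true martingale using the moment condition. For fixed $u$ with $T<T^*(\operatorname{Re}u)$, I first take for granted existence, uniqueness and continuity of $g_u$ solving \eqref{g 1u} on $[0,T]$. These follow from the Riccati--Volterra theory in \cite[Theorem 5.5]{larsson2023chapter} and \cite{abi2019affine}. The identification of the non-explosion interval of $g_u$ with $T<T^*(\operatorname{Re}u)$ is also taken from those references, namely the comparison $|\exp(\cdot)|\le \exp(\operatorname{Re}\cdot)$ combined with the real case. I then define the candidate
\[
M_t:=\exp\Bigl(uX_t+Y_t\Bigr),\qquad Y_t:=\int_t^T g_u(T-s)\xi_t(s)\,\rmd s ,
\]
and note that $M_T=e^{uX_T}$, since the integral is empty at $t=T$. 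It therefore suffices to show that $M$ is a true martingale on $[0,T]$.

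\textbf{Step 1 (semimartingale dynamics of $Y$).} Using the proposition on the forward variance, $\rmd\xi_t(s)=\widehat K(s-t)\sqrt{V_t}\,\rmd B_t$ with $\xi_t(t)=V_t$. A stochastic Fubini argument, justified since $\widehat K\in L^2_{\text{loc}}$ and $g_u$ is bounded and continuous, gives
\[
\rmd Y_t=-g_u(T-t)V_t\,\rmd t+\Bigl(\int_t^T g_u(T-s)\widehat K(s-t)\,\rmd s\Bigr)\sqrt{V_t}\,\rmd B_t
= -g_u(T-t)V_t\,\rmd t+(\widehat K*g_u)(T-t)\sqrt{V_t}\,\rmd B_t .
\]
Here the drift term comes from the moving lower limit together with $\xi_t(t)=V_t$.

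\textbf{Step 2 (local martingale property).} Write $\psi(t):=(\widehat K*g_u)(T-t)$ and apply Itô's formula to $\exp(uX+Y)$ using \eqref{eq:model-X}. The $\rmd t$-coefficient divided by $M_tV_t$ is
\[
-\tfrac u2-g_u(T-t)+\tfrac12u^2+\rho u\,\psi(t)+\tfrac12\psi(t)^2 ,
\]
and this vanishes exactly by \eqref{g 1u}. Hence $M$ is a local martingale:
\[
\rmd M_t=M_t\sqrt{V_t}\bigl(u\,\rmd W_t+\psi(t)\,\rmd B_t\bigr).
\]

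\textbf{Step 3 (true martingale), the main obstacle.} Local martingales of this type need not be martingales, and with complex $u$ the usual positivity/supermartingale argument fails. I would first treat real $u=a$ with $T<T^*(a)$. Here $M$ is a positive local martingale, hence a supermartingale. For the reverse inequality $\mathbb E[M_T]\ge M_0$ I would follow \cite{abi2019affine}: use a change of measure with density $M/M_0$ stopped at $\tau_n=\inf\{t:V_t\ge n\}$, and show that under the new measures $V$ again solves a Volterra square-root equation with modified drift $\lambda(\theta-V)+\sigma(\rho a+\psi)V$. Tightness, or non-explosion, of $V$ under these measures then gives $\mathbb Q_n(\tau_n\le T)\to0$, so $\mathbb E[M_T]=M_0$. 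For complex $u$ with $\operatorname{Re}u=a$, I would bound $|M_t|\le \exp(aX_t+\operatorname{Re}Y_t)$ and use $\operatorname{Re}g_u\le g_a$, a comparison result for Riccati--Volterra equations available in \cite{abi2019affine}. Together with $\xi_t\ge0$ this dominates $|M|$ by the real-case true martingale $\mathbb E[e^{aX_T}\mid\mathcal F_t]$, which is uniformly integrable on $[0,T]$. The local martingale $M$ is therefore of class (D) and hence a martingale. Consequently $M_t=\mathbb E[M_T\mid\mathcal F_t]=\mathbb E[e^{uX_T}\mid\mathcal F_t]$, which is \eqref{eq:H-tilde}. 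Continuity of $g_u$ is inherited from continuity of $\widehat K*g_u$ in \eqref{g 1u}.
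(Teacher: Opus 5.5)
\textbf{Gap: the complex case of Step 3 rests on a false comparison.} Your Steps 1--2 and the real-exponent part of Step 3 are sound. The paper itself gives no proof here; it quotes \cite[Example 7.2]{abi2019affine}, \cite[Theorem 5.5]{larsson2023chapter} and \cite[Corollary I.2.19]{martins2023aspects}.

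The pointwise inequality $\operatorname{Re} g_u\le g_a$ is not supplied by the Riccati--Volterra comparison theory, and it is false in general. Comparison principles act on $\psi_u:=\sigma^{-1}\widehat K*g_u$, the solution of $\psi=K*F(u,\psi)$, and give at best $\operatorname{Re}\psi_u\le\psi_a$. But $g_u=Q(u,\psi_u)$ with $Q(u,x)=\frac12(u^2-u)+\rho\sigma u x+\frac12\sigma^2x^2$. One has $\operatorname{Re}Q(u,\psi)\le Q(a,\operatorname{Re}\psi)$, yet $x\mapsto Q(a,x)$ is a non-monotone parabola, so nothing transfers from $\psi$ to $g$.

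A concrete failure occurs for the constant kernel (classical Heston, allowed by the standing assumptions) with $a=0$. Then $g_0\equiv0$, and your claim reads $\operatorname{Re} g_{ib}\le 0$. Here $g_{ib}=\psi'+\lambda\psi$ with $\psi'=F(ib,\psi)$. Rescale $\psi(t)=\frac b\sigma\phi(\sigma b t)$ and set $\tau=\sigma b t$, $c=\sqrt{1-\rho^2}$, and $w$ solving $\dot w=\frac12(w^2-c^2)$, $w(0)=i\rho$. Then $\operatorname{Re} g_{ib}(t)=b^2\frac{\rmd}{\rmd\tau}\operatorname{Re} w(\tau)+O(b)$, and explicitly
\[
\operatorname{Re} w(\tau)=c\,\frac{1-R^2}{1-2R\cos\beta+R^2},\qquad R=e^{c\tau},\quad \cos\beta=2\rho^2-1 .
\]
For $\rho^2>1/2$ this undershoots its limit $-c$ and then increases back to it. So $\operatorname{Re} g_{ib}>0$ on a subinterval of $(0,T)$ for large $|b|$.

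Consequently your domination $|M_t|\le\mathbb E[e^{aX_T}\mid\mathcal F_t]$ is not established. The inequality is true, but only a posteriori, as a consequence of the very formula you are proving. The class (D) conclusion therefore collapses for non-real $u$.

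\textbf{Taking real parts does not rescue it.} Write $\int_0^\cdot\sqrt{V_s}\,(u\,\rmd W_s+\psi(s)\,\rmd B_s)=N+iN'$. Then $|M|=|M_0|\,\mathcal E(N)\,e^{\frac12\langle N'\rangle}$, so the imaginary part enters with the unfavourable sign.

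\textbf{A repair.} The standard route from real to complex exponents is analytic continuation in $u$.
\begin{enumerate}
\item Establish the formula for real $u$ in the admissible interval, which your Step 3 already does.
\item Fix $t$. Compare the version $u\mapsto\int e^{ux}\mu_{t,\omega}(\rmd x)$ of the conditional transform, built from a regular conditional law of $X_T$ given $\mathcal F_t$, with $u\mapsto M^u_t(\omega)$.
\item Both are analytic on the open strip. For the second this needs analyticity of $u\mapsto g_u$, which you must add to what you take from the Riccati--Volterra literature.
\item The two agree on the real segment, hence on the whole strip.
\end{enumerate}
Alternatively, invoke the cited transform theorem directly, as the paper does.
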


We also recall the following property, introduced in \cite[Corollary I.2.14]{martins2023aspects}. 

\begin{proposition} \label{dynamics int xi}
For any function $f\in L^1([0,T]; \mathbb{C})$, there exists a continuous semimartingale $(V_t^f)_{0\leq t \leq T}$ such that
\[V_t^f = \int_t^T f(T-u)\xi_t(u)\rmd u \quad 0\leq t \leq T,\]
in particular $V_T^f=0$, and can be written as
\[V_t^f = V_0^f -\int_0^t f(T-s) V_s \rmd s + \int_0^t (\widehat K * f)(T-s) \sqrt V_s \rmd B_s, \quad 0\leq t \leq T.  {\color{red}}\]
\end{proposition}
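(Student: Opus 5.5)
The plan is to start from the forward variance dynamics in the preceding proposition and integrate them against $f(T-u)$, interchanging integrals with a stochastic Fubini theorem.

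\textbf{Step 1 (representation of $\xi_t(u)$).} For each $u\in[0,T]$ and $t\le u$, write
\[
\xi_t(u)=\xi_0(u)+\int_0^t \widehat K(u-s)\sqrt{V_s}\,\rmd B_s .
\]
Substituting this into the definition of $V^f_t$ gives
\[
V^f_t=\int_t^T f(T-u)\xi_0(u)\,\rmd u+\int_t^T f(T-u)\int_0^t \widehat K(u-s)\sqrt{V_s}\,\rmd B_s\,\rmd u .
\]

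\textbf{Step 2 (stochastic Fubini).} I would exchange the $\rmd u$ and $\rmd B_s$ integrals. This requires
\[
\mathbb E\int_0^T\Bigl(\int_s^T |f(T-u)|\,|\widehat K(u-s)|\,\rmd u\Bigr)^2 V_s\,\rmd s<\infty .
\]
I expect this to be the main technical point. By Young's inequality, the inner integral is bounded by the convolution $(|f|*|\widehat K|)$ evaluated at $T-s$. Since $f\in L^1$ and $\widehat K\in L^2_{\mathrm{loc}}$ (the resolvent of an $L^2_{\mathrm{loc}}$ kernel), this convolution lies in $L^2([0,T])$. Combined with $\sup_{s\le T}\mathbb E[V_s]<\infty$, which holds because $\mathbb E[V_s]=\xi_0(s)$ is continuous, the condition follows.

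After the exchange, split $\int_t^T\rmd u$ as $\int_s^T\rmd u-\int_s^t\rmd u$ for $s\le t$. This yields
\[
V^f_t=\int_t^T f(T-u)\xi_0(u)\,\rmd u+\int_0^t (\widehat K*f)(T-s)\sqrt{V_s}\,\rmd B_s-\int_0^t\!\int_s^t f(T-u)\widehat K(u-s)\,\rmd u\,\sqrt{V_s}\,\rmd B_s .
\]
Here I use the identity $\int_s^T f(T-u)\widehat K(u-s)\,\rmd u=(\widehat K*f)(T-s)$.

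\textbf{Step 3 (identify the drift).} Apply stochastic Fubini once more to the last term, which turns it into
\[
\int_0^t f(T-u)\bigl(\xi_u(u)-\xi_0(u)\bigr)\,\rmd u=\int_0^t f(T-u)\bigl(V_u-\xi_0(u)\bigr)\,\rmd u ,
\]
using $\xi_u(u)=V_u$. Collecting terms:
\begin{itemize}
\item the $\xi_0$ contributions combine to $\int_0^T f(T-u)\xi_0(u)\,\rmd u=V^f_0$;
\item the remaining terms are $-\int_0^t f(T-s)V_s\,\rmd s+\int_0^t(\widehat K*f)(T-s)\sqrt{V_s}\,\rmd B_s$.
\end{itemize}
This is exactly the claimed decomposition, and $V^f_T=0$ is immediate from the definition.

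\textbf{Step 4 (regularity).} The stochastic integral is a continuous local martingale; it is in fact in $\mathcal H^2$ by the $L^2$ bound from Step 2. The drift $\int_0^t f(T-s)V_s\,\rmd s$ is continuous and of finite variation, since $V$ has continuous paths, hence is bounded on $[0,T]$, and $f\in L^1$. Taking the continuous version of the decomposition therefore gives a continuous semimartingale.

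For complex-valued $f$, I would apply the argument to the real and imaginary parts separately.
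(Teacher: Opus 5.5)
The paper gives no proof of this statement; it only recalls it from \cite[Corollary I.2.14]{martins2023aspects}. Your argument is correct and makes the result self-contained. You insert $\xi_t(u)=\xi_0(u)+\int_0^t\widehat K(u-s)\sqrt{V_s}\,\rmd B_s$, exchange the $\rmd u$ and $\rmd B_s$ integrals twice by stochastic Fubini, and identify the drift through $\xi_u(u)=V_u$. This is the standard route, the bookkeeping in Steps 2--3 is right, and Step 4 correctly produces a continuous semimartingale version.

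One technical point should be tightened. The condition you display in Step 2 is the square-integrability of the already integrated kernel $(|\widehat K|*|f|)(T-s)$. That guarantees the right-hand stochastic integral exists, and it gives your $\mathcal H^2$ claim in Step 4. It is not, however, the hypothesis of the usual stochastic Fubini theorems. Those also require control of the $u$-indexed family of stochastic integrals, for example $\int_0^T |f(T-u)|\,(\mathbb E\int_0^{u}\widehat K(u-s)^2V_s\,\rmd s)^{1/2}\,\rmd u<\infty$. Without such a condition the left-hand side $\int f(T-u)\int_0^t\widehat K(u-s)\sqrt{V_s}\,\rmd B_s\,\rmd u$ need not even be well defined. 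Here the condition holds by the same ingredients you already use, since it is bounded by $\|f\|_{L^1}\|\widehat K\|_{L^2(0,T)}\sup_{s\le T}\mathbb E[V_s]^{1/2}$. It covers both exchanges, because $\int_0^t\le\int_0^u$ for $u\ge t$.

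Similarly, replacing $\xi_u(u)$ by $V_u$ inside the $\rmd u$-integral in Step 3 needs two facts. The identity holds a.s.\ for each fixed $u$, and you need a jointly measurable version of $(u,\omega)\mapsto\int_0^u\widehat K(u-s)\sqrt{V_s}\,\rmd B_s$; the stochastic Fubini theorem supplies that version. It would be worth saying so explicitly.
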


\section{Hedging a variance swap}\label{sec:variance-swap}
This section focuses on the computation of the optimal semi-static variance-optimal hedge for a variance swap in the rough Heston model, where the static component consists of European claims with Fourier-representable payoffs, such as calls and puts.

Let $H^0$ be a variance swap written on the stock $S=e^X$. 
More precisely, the payoff at maturity $T >0$ of the variance swap is defined as
\begin{equation*} \label{eq: def H_T^0}
    H^0_T= [X,X]_T-k,
\end{equation*}
where $k$ is the so-called swap rate, i.e., $k=\mathbb{E}[[X,X]_T]$ so that $\mathbb{E}[H^0_T]=0$ and the contract is zero at inception.
When $X$ is given by \eqref{eq:model-X}, then

\[H_T^0=\int_0^T V_s \rmd s-k,\]

and
\begin{equation}\label{eq:variance-swap-t}
    H_t^0:=\mathbb{E}[H_T^0\vert \mathcal{F}_t] = \int_0^t V_s \rmd s + \int_t^T \mathbb{E}[V_s \vert \mathcal{F}_t] \rmd s -k = \int_0^t V_s \rmd s + \int_t^T \xi_t(s) \rmd s -k.
\end{equation} 
The set of contingent claims used to  hedge is a basket $\pmb{H}=(H^1, \dots, H^d)^\top$ of European claims written on $S=e^X$.

\begin{assumption}\label{assumption T}
    Let $R_1<0$ and $R_2>1$. Assume that $T < \min\{T^*(2R_1), T^*(2R_2)\}$.
\end{assumption}

\begin{remark}\label{rem:explosion-time-heston}
According to \cite{gerhold2019moment}, the moment explosion time for a given order $u\in \mathbb{R}$ in the rough Heston model is finite if and only if the corresponding explosion time in the classical Heston model is finite. Moreover, the explosion time of the Riccati--Volterra equation associated with $g_u$ in \eqref{g 1u} coincides with the explosion time of the moment $\mathbb{E}[e^{uX_T}]$. Since the classical Heston model admits an explicit characterization of moment explosions in terms of its parameters, see e.g. \cite{andersen2007moment}, a simple sufficient condition for Assumption~\ref{assumption T} is to require
$T^*_{\mathrm{Heston}}(2R_1)=T^*_{\mathrm{Heston}}(2R_2)=\infty.$
This implies that the corresponding rough Heston explosion times are infinite, so that Assumption~\ref{assumption T} holds for any finite maturity $T$. Alternatively, one can exploit some a-priori-bounds on the rough Heston explosion time, see \cite{gerhold2019moment} and \cite{keller2020comparison} for further details.

\end{remark}

\begin{proposition}\label{prop:uniform_explosion_interval}
If Assumption~\ref{assumption T} holds, then 
$T<\inf_{a\in[2R_1,2R_2]}T^*(a).$
\end{proposition}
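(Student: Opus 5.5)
The plan is to use convexity of the moment function $a\mapsto \mathbb{E}[e^{aX_t}]$ together with monotonicity of moments in time. First I will show that for fixed $t$ the set $\{a\in\mathbb R:\mathbb{E}[e^{aX_t}]<\infty\}$ is an interval containing $0$. Take $a\in[2R_1,2R_2]$, so $a=\alpha\, 2R_1+(1-\alpha)\,2R_2$ for some $\alpha\in[0,1]$. By Hölder's inequality (equivalently, the pointwise convexity bound $e^{ax}\le \alpha e^{2R_1x}+(1-\alpha)e^{2R_2x}$),
\[
\mathbb{E}[e^{aX_t}]\le \alpha\,\mathbb{E}[e^{2R_1X_t}]+(1-\alpha)\,\mathbb{E}[e^{2R_2X_t}].
\]
So whenever both endpoint moments are finite at time $t$, every intermediate moment is finite as well.

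Next I need to know that finiteness of a moment at time $T$ propagates to all earlier times, so that $T^*(a)$ genuinely marks the boundary of finiteness. Since $S=e^X$ is a nonnegative local martingale, hence a supermartingale, and $X$ solves \eqref{eq:model-X}, I will argue as follows. For $a\ge 1$ or $a\le 0$ the map $x\mapsto x^a$ on $(0,\infty)$ is convex. If $S$ is a true martingale, which holds in rough Heston (for instance via Proposition~\ref{conditional expectation} at $u=1$, or the standard result in \cite{abi2019affine}), then $S^a$ is a submartingale by conditional Jensen. Hence $t\mapsto\mathbb{E}[e^{aX_t}]$ is nondecreasing for $a\notin(0,1)$. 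For $a\in(0,1)$ the moment is always finite, because $\mathbb{E}[S_t^a]\le \mathbb{E}[S_t]^a<\infty$. Since $2R_1<0$ and $2R_2>2$, the endpoints lie in the convex regime. Assumption~\ref{assumption T} therefore gives some $T'$ with $T<T'<\min\{T^*(2R_1),T^*(2R_2)\}$ and both endpoint moments finite at $T'$. By the previous paragraph, $\mathbb{E}[e^{aX_{T'}}]<\infty$ for every $a\in[2R_1,2R_2]$, and hence $T^*(a)\ge T'$ for all such $a$.

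Taking the infimum over $a$ gives $\inf_{a\in[2R_1,2R_2]}T^*(a)\ge T'>T$, which is the strict inequality in the statement. The choice of an intermediate $T'$ is exactly what turns the non-strict bound into a strict one. This step matters because the infimum of a family of explosion times could a priori equal $T$ if only pointwise strictness were used.

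I expect the main obstacle to be the monotonicity step: justifying that $\mathbb{E}[e^{aX_t}]<\infty$ for all $t\le T'$ once it is finite at $T'$, which in turn requires the true martingale property of $S$. If invoking the martingale property is considered too heavy, I would instead appeal to Remark~\ref{rem:explosion-time-heston}. That remark identifies $T^*(a)$ with the explosion time of the Riccati--Volterra solution $g_a$ in \eqref{g 1u}, and solutions are finite on $[0,T^*(a))$, so finiteness on an initial interval is built in.
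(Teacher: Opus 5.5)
Your proposal is correct and follows essentially the same route as the paper: choose an intermediate time $T+\varepsilon$ (your $T'$) below both endpoint explosion times, interpolate the endpoint moments at that time via Hölder's inequality (or your convexity bound), and conclude $T^*(a)\ge T+\varepsilon>T$ for all $a\in[2R_1,2R_2]$. The only difference is the step ``$t<T^*(a)$ implies $\mathbb{E}[e^{aX_t}]<\infty$'': you prove it yourself via the submartingale property of $S^a$ for $a\notin(0,1)$, using that $S$ is a true martingale, whereas the paper cites \cite[Theorem 7]{gerhold2019moment} for it.
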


\begin{proof}
By assumption, $T<\min\{T^*(2R_1), T^*(2R_2)\}$.Hence there exists $\varepsilon>0$ such that
$T+\varepsilon<\min\{T^*(2R_1), T^*(2R_2)\}.$
Therefore
\[T+\varepsilon<T^*(2R_1),
\qquad
T+\varepsilon<T^*(2R_2),
\]
i.e., referring to \cite[Theorem 7]{gerhold2019moment}
\[
\mathbb E\left[e^{2R_1X_{T+\varepsilon}}\right]<\infty,
\qquad
\mathbb E\left[e^{2R_2X_{T+\varepsilon}}\right]<\infty.
\]

Let now $a\in[2R_1,2R_2]$. Then there exists $\lambda\in[0,1]$ such that
$a=\lambda 2R_1+(1-\lambda)2R_2.$
Hence, by Hölder's inequality,
\[
\begin{aligned}
\mathbb E\left[e^{aX_{T+\varepsilon}}\right]
&=
\mathbb E\left[
e^{\lambda 2R_1X_{T+\varepsilon}}
e^{(1-\lambda)2R_2X_{T+\varepsilon}}
\right] \leq
\left(
\mathbb E\left[e^{2R_1X_{T+\varepsilon}}\right]
\right)^\lambda
\left(
\mathbb E\left[e^{2R_2X_{T+\varepsilon}}\right]
\right)^{1-\lambda}
<\infty.
\end{aligned}
\]
Therefore
$T+\varepsilon\leq T^*(a)$
for every $a\in[2R_1,2R_2]$. Consequently,
$\inf_{a\in[2R_1,2R_2]}T^*(a)\geq T+\varepsilon>T.$
\end{proof}

\begin{proposition}     \label{prop:square-int-mg}Assumption~\ref{assumption T} holds. Then $(S_t)_{0\leq t\leq T}, (H_t^0)_{0\leq t\leq T}$ are square integrable martingales. Moreover, assume $u\in \mathbb{C}$, such that $T<T^*(2\operatorname{Re}u)$, then  also $(\widetilde H_t^u)_{0\leq t\leq T}$ is a square-integrable martingale. In particular, it holds for $u\in \mathbb{C}$ such that $\operatorname{Re}(u) \in [R_1, R_2]$.
\end{proposition}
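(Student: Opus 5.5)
We prove the three claims separately, each by reducing square integrability of a martingale to a finite second moment at the terminal time plus Doob's inequality. Throughout we use Proposition~\ref{prop:uniform_explosion_interval}: under Assumption~\ref{assumption T}, $\mathbb E[e^{aX_t}]<\infty$ for all $a\in[2R_1,2R_2]$ and all $t\le T$. Since $2R_1<0$ and $2R_2>2$, this interval contains $a=2$. Here we use that $T^*(a)$ is a supremum of times with finite moments, and that finiteness of the moment at $T+\varepsilon$ propagates to earlier times: $e^{aX}$ is a positive local martingale times a nonnegative exponential of $\int V$, or one can invoke \cite[Theorem 7]{gerhold2019moment} directly.

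\emph{The price process $S$.} By \eqref{eq:price-S}, $S$ is a positive local martingale, hence a supermartingale. To show it is a true martingale, we use the affine transform structure. Proposition~\ref{conditional expectation} with $u=1$ gives $g_1\equiv0$, since $u^2-u=0$ and $0$ solves \eqref{g 1u} uniquely. Hence $\mathbb E[S_T\mid\mathcal F_t]=S_t$, so $S$ is a martingale. Moreover $\mathbb E[S_T^2]=\mathbb E[e^{2X_T}]<\infty$ because $2\in[2R_1,2R_2]$. Doob's $L^2$ inequality then gives $S\in\mathcal H^2$.

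\emph{The conditional transforms $\widetilde H^u$.} By definition, $\widetilde H^u_t=\mathbb E[e^{uX_T}\mid\mathcal F_t]$ is a martingale once $e^{uX_T}\in L^1$. Since $|e^{uX_T}|^2=e^{2\operatorname{Re}u\,X_T}$ and $T<T^*(2\operatorname{Re}u)$, we get $e^{uX_T}\in L^2$. Conditional Jensen yields $\mathbb E[|\widetilde H^u_t|^2]\le \mathbb E[e^{2\operatorname{Re}u X_T}]$, and Doob gives $\widetilde H^u\in\mathcal H^2_{\mathbb C}$. Càdlàg paths follow from the representation \eqref{eq:H-tilde} combined with Proposition~\ref{dynamics int xi}, which exhibits $\widetilde H^u$ as the exponential of a continuous semimartingale. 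The final claim, for $\operatorname{Re}u\in[R_1,R_2]$, follows from Proposition~\ref{prop:uniform_explosion_interval}, since then $2\operatorname{Re}u\in[2R_1,2R_2]$.

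\emph{The variance swap $H^0$.} By \eqref{eq:variance-swap-t}, $H^0$ is the conditional expectation of $H^0_T=\int_0^TV_s\,\rmd s-k$, so it suffices to show $\mathbb E[(\int_0^TV_s\,\rmd s)^2]<\infty$. We expect this to be the main (mild) obstacle, since it concerns moments of $V$ rather than of $X$. There are two routes.

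First, use standard moment bounds for Volterra square-root processes from \cite{abi2019affine}: $\sup_{t\le T}\mathbb E[V_t^p]<\infty$ for all $p$. Then Cauchy--Schwarz gives $\mathbb E[(\int_0^T V)^2]\le T\int_0^T\mathbb E[V_s^2]\,\rmd s<\infty$.

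Second, and self-contained, use the decomposition $\int_0^TV_s\,\rmd s=2(X_0-X_T)+2\int_0^T\sqrt{V_s}\,\rmd W_s$. Here $X_T\in L^2$ because exponential moments of orders $2R_1<0<2R_2$ exist. For the stochastic integral, localize with $\tau_n$ and apply BDG to get
\[
\mathbb E\Bigl[\Bigl(\textstyle\int_0^{\tau_n}V\Bigr)^2\Bigr]\le C\bigl(1+\mathbb E[X_{\tau_n}^2]\bigr)+C\,\mathbb E\Bigl[\textstyle\int_0^{\tau_n}V\Bigr].
\]
The last term is bounded by $\tfrac12\mathbb E[(\int_0^{\tau_n} V)^2]+C$ and can be absorbed, after which we let $n\to\infty$. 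Controlling $\mathbb E[X_{\tau_n}^2]$ uniformly requires the first route anyway, so we will rely primarily on the moment bounds of \cite{abi2019affine}.

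Once $H^0_T\in L^2$, Doob's inequality gives $H^0\in\mathcal H^2$. The representation \eqref{eq:variance-swap-t} together with Proposition~\ref{dynamics int xi} (with $f\equiv1$) shows that $H^0$ is continuous.
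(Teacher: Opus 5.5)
Your proposal is correct and follows the paper's proof essentially step by step: $g_1\equiv 0$ gives the martingale property of $S$, $\mathbb E[e^{2X_T}]<\infty$ gives its square integrability, conditional Jensen handles $\widetilde H^u$, and the moment bounds of \cite{abi2019affine} together with Cauchy--Schwarz give $H^0_T\in L^2$ (your second route is not needed). Your additions (Doob's inequality, path regularity, and passing finiteness of moments from $T+\varepsilon$ to $T$) only make explicit what the paper takes for granted. For the last of these, the clean argument is conditional Jensen for the convex map $x\mapsto x^a$, $a\notin(0,1)$, applied to the martingale $S$; your local-martingale-times-exponential heuristic would additionally need the stochastic exponential to be a true martingale.
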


\begin{proof}
Recall $R_1<0$ and $R_2>1$, so by Proposition~\ref{prop:uniform_explosion_interval}, it implies $T<T^*(1),\, T< T^*(2)$.
Since $T < T^*(2)$, $\mathbb{E}[e^{2X_T}]=\mathbb{E}[S_T^2]<\infty$, i.e. $S$ is square-integrable. In particular,  $S$ is also a martingale, since it is integrable by hypothesis ($T<T^*(1)$, $\mathbb{E}[e^{X_T}]=\mathbb{E}[S_T]<\infty$) and applying Proposition~\ref{conditional expectation},
\begin{align*}
    \mathbb{E}[S_T\vert\mathcal{F}_t] = \mathbb{E}[e^{X_T}\vert\mathcal{F}_t] = \exp\left(X_t+\int_t^T g_{1}(T-s) \xi_t(s)\right),
\end{align*}
where $g_1$ is the unique solution to \eqref{g 1u} evaluated in $u=1$, i.e.
\[g_1(t)=\rho (\widehat K*g_1)(t)+\frac{1}{2}((\widehat K*g_1)(t))^2.\]
By substitution, it follows that $g_{1}(t)\equiv 0$ is the unique solution, thus
$\mathbb{E}[S_T\vert\mathcal{F}_t] = \mathbb{E}[e^{X_T}\vert\mathcal{F}_t] = \exp(X_t)=S_t.$
Next, recall that
\[
    H_T^0=\int_0^T V_s\,\rmd s-k,
    \qquad
    H_t^0=\mathbb E[H_T^0\mid\mathcal F_t].
\]
Thus, once $H_T^0\in L^2$ is shown, $(H_t^0)_{0\leq t\leq T}$ is a square-integrable martingale by standard martingale theory.
\begin{align*}
    \mathbb{E}[(H_T^0)^2] &= \mathbb{E}\left[ \left( \int_0^T V_s \rmd s  -k \right)^2 \right] \leq C \mathbb{E}\left[ \left( \int_0^T V_s \rmd s \right)^2 \right] +Ck^2\leq C \int_0^T \mathbb{E}[V_s^2] \rmd s +Ck^2 <\infty,
\end{align*}
for some constant $C>0$.
The last estimate follows by \cite[Lemma 3.1 and Theorem 3.4]{abi2019affine}. Applying their
moment estimate to the rough Heston variance process yields, for every fixed $T > 0$, a constant $c_T > 0$ such that
$\sup_{t\leq T} \mathbb{E}[V_t^2]\leq c_T,$
meaning $\int_0^t \mathbb{E}[V_s^2] \rmd s \leq c_T t.$ 

Finally, let $u\in\mathbb C$ satisfy $T<T^*(2\operatorname{Re}u)$. Since $\widetilde H_t^u
    =
    \mathbb E[e^{uX_T}\mid\mathcal F_t]$,
the process $\widetilde H^u$ is a martingale. Therefore, it is enough to prove $\widetilde H^u_T\in L^2$:
\begin{align*}
\mathbb{E}[\vert\mathbb{E}[\exp(uX_T)] \vert^2] \leq \mathbb{E}[\mathbb{E}[|\exp(uX_T)|^2]] =\mathbb{E}[\exp(2\operatorname{Re}uX_T)]<\infty.\qquad\qedhere
\end{align*} 
\end{proof}

Theorem~\ref{thm:strategies-rough-heston} provides the optimal semi-static variance-optimal hedge for a variance swap when a basket of European options is available for static trading.

\begin{theorem}\label{thm:strategies-rough-heston}
Suppose that Assumption~\ref{assumption T} holds. Let $R_j\in\mathbb R$,
$j=1,\dots,d$, and let $0<\varepsilon<2$ be such that
$\mathbb E[e^{2R_jX_T}]<\infty,
    \,
    \mathbb E[e^{(2+\varepsilon)R_jX_T}]<\infty,
    \, j=1,\dots,d.$
Define $\mathcal S^j:=\{u\in\mathbb C:\operatorname{Re}(u)=R_j\},\, j=1,\dots,d.$
Consider real-valued square-integrable European options
$H^j=h^j(X_T)$, $j=1,\dots,d$, where $h^j$ admits a Fourier representation as in
\eqref{eq:fourier-repr}--\eqref{eq:measure-repr} on $\mathcal S^j$ with respect to
the measure $\zeta^j$.

Then the semi-static variance-optimal hedge of the variance swap through a basket of European contingent claims are given by:
\begin{align*}
    \vartheta_t^0
    &=
    \frac{\rho(\widehat K*1)(T-t)}{S_t},\\
    \vartheta_t^j
    &=
    \frac{1}{S_t}
    \int_{\mathcal S^j}
        \widetilde H_t^u
        \left(
            u+\rho(\widehat K*g_u)(T-t)
        \right)
    \zeta^j(du),\\
    A
    &=
    \int_0^T
        (1-\rho^2)
        ((\widehat K*1)(T-t))^2
        \mathbb E[V_t]
    dt,\\
    B^j
    &=
    \int_0^T
    \int_{\mathcal S^j}
        (1-\rho^2)
        (\widehat K*g_u)(T-t)
        (\widehat K*1)(T-t)
        \mathbb E[\widetilde H_t^u V_t]
    \zeta^j(du)dt,\\
    C^{ij}
    &=
    \int_0^T
    \int_{\mathcal S^i}
    \int_{\mathcal S^j}
        (1-\rho^2)
        (\widehat K*g_{u_i})(T-t)
        (\widehat K*g_{u_j})(T-t)
        \mathbb E[\widetilde H_t^{u_i}\widetilde H_t^{u_j}V_t]
    \zeta^i(du_i)\zeta^j(du_j)dt.
\end{align*}
\end{theorem}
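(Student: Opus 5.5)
The plan is to compute the GKW decompositions of $H^0$ and of each $H^j$ with respect to $S$ explicitly. For $H^j$ we use the Fourier interchange results \cite[Theorems 4.1, 4.2]{ditella2019semistatic}, and then read off $A,B,C$ from \eqref{eq:ABC}. All processes are continuous. The only martingale drivers are $W$ and $B$, and it is convenient to write $B=\rho W+\sqrt{1-\rho^2}W^\perp$ with $W^\perp$ a Brownian motion independent of $W$. Since $\rmd S_t=S_t\sqrt{V_t}\,\rmd W_t$, for any martingale $M$ of the form $\rmd M=a_t\sqrt{V_t}\,\rmd W_t+b_t\sqrt{V_t}\,\rmd B_t$ the GKW integrand is $\vartheta=(a_t+\rho b_t)/S_t$. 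The residual is then $\rmd L=b_t\sqrt{1-\rho^2}\sqrt{V_t}\,\rmd W^\perp_t$, which gives $\rmd\langle L^i,L^j\rangle_t=(1-\rho^2)b^i_tb^j_tV_t\,\rmd t$.

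\textbf{The variance swap.} By \eqref{eq:variance-swap-t}, $H^0_t=\int_0^tV_s\,\rmd s+V^f_t-k$ with $f\equiv1$ in Proposition~\ref{dynamics int xi}. The drift $-V_t\,\rmd t$ of $V^f$ cancels against $V_t\,\rmd t$, leaving $\rmd H^0_t=(\widehat K*1)(T-t)\sqrt{V_t}\,\rmd B_t$. Hence $a=0$ and $b=(\widehat K*1)(T-t)$. This yields $\vartheta^0$, and $A=\mathbb E\int_0^T(1-\rho^2)((\widehat K*1)(T-t))^2V_t\,\rmd t$, from which Fubini gives the stated formula.

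\textbf{The exponential claims.} For $u$ with $\operatorname{Re}u\in\{R_j\}$, Proposition~\ref{conditional expectation} gives $\widetilde H^u_t=\exp(uX_t+V^{g_u}_t)$. Itô's formula, using Proposition~\ref{dynamics int xi} with $f=g_u$, shows that the $\rmd t$ terms are
\[
-\tfrac12uV_t-g_u(T-t)V_t+\tfrac12u^2V_t+\rho u(\widehat K*g_u)(T-t)V_t+\tfrac12((\widehat K*g_u)(T-t))^2V_t .
\]
These vanish exactly by the Riccati--Volterra equation \eqref{g 1u}. This is consistent with the martingale property from Proposition~\ref{prop:square-int-mg}, and it leaves
\[
\rmd\widetilde H^u_t=\widetilde H^u_t\bigl(u\sqrt{V_t}\,\rmd W_t+(\widehat K*g_u)(T-t)\sqrt{V_t}\,\rmd B_t\bigr).
\]
So $\vartheta^u_t=\widetilde H^u_t(u+\rho(\widehat K*g_u)(T-t))/S_t$, and $\rmd L^u_t=\widetilde H^u_t(\widehat K*g_u)(T-t)\sqrt{1-\rho^2}\sqrt{V_t}\,\rmd W^\perp_t$. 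Integrating against $\zeta^j$ via \eqref{eq:fourier-repr-claim} and the interchange theorem gives $\vartheta^j=\int\vartheta^u\zeta^j(\rmd u)$ and $L^j=\int L^u\zeta^j(\rmd u)$. It also gives $\langle L^0,L^j\rangle=\int\langle L^0,L^u\rangle\zeta^j(\rmd u)$, and similarly for $C^{ij}$ as a double integral; note that $H^j$ is real, so no conjugation issue arises. Taking expectations and applying Fubini produces $B^j$ and $C^{ij}$. Since $(\vartheta^j,L^j)$ and $(\vartheta^0,L^0)$ are the GKW components, the general solution recalled in Section~\ref{sec:semistatic-hedging} yields the semi-static hedge.

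\textbf{Main obstacle.} The hard step is verifying the hypotheses of the interchange theorem and of Fubini. These require joint measurability of $(u,t,\omega)\mapsto\vartheta^u_t,\widetilde H^u_t$ and an integrability bound of the form
\[
\int\!\!\int\mathbb E\int_0^T|\widetilde H^{u_i}_t\widetilde H^{u_j}_t|\,|(\widehat K*g_{u_i})(\widehat K*g_{u_j})|(T-t)\,V_t\,\rmd t\,|\zeta^i|(\rmd u_i)|\zeta^j|(\rmd u_j)<\infty,
\]
together with its analogue for $\vartheta^u$. I would first bound $|\widetilde H^u_t|\le\mathbb E[e^{R_jX_T}\mid\mathcal F_t]$ by Jensen. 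Next I would use Hölder with exponents $\frac{2+\varepsilon}{2}$ and its conjugate to split off $V_t$. This controls $\mathbb E[\sup_t|\widetilde H^u_t|^{2+\varepsilon}]$ uniformly in $u$ on $\mathcal S^j$ through Doob and $\mathbb E[e^{(2+\varepsilon)R_jX_T}]<\infty$, with all moments of $V$ bounded as in Proposition~\ref{prop:square-int-mg}. The delicate point is a bound on $\sup_{t\le T}|(\widehat K*g_u)(t)|$ growing at most polynomially in $|u|$ along $\mathcal S^j$. For this I would use the Riccati--Volterra a priori estimates of \cite{abi2019affine}, which suggest linear growth since $\operatorname{Re}$ of the quadratic part is controlled. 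This growth is then absorbed by the integrability of $\zeta^j$; for calls and puts, $\widehat h$ decays like $|u|^{-2}$, which suffices for linear growth.
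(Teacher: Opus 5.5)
Your GKW computations for $H^0$ and $\widetilde H^u$ coincide with the paper's: the drift cancellation via \eqref{g 1u}, the integrands $\vartheta^0,\vartheta^u$, and the residual brackets obtained through $W^\perp$. The gap is in the step you flag as the main obstacle. You bound the expectation factors uniformly in $u$, bound the deterministic factor by $\sup_{t\le T}|(\widehat K*g_u)(t)|\le C(1+|u|)$, and then integrate against $|\zeta^j|$. For calls and puts, however, $|\widehat h(u)|\sim|u|^{-2}$ on $\mathcal S^j$, so this gives $\int_{\mathcal S^j}(1+|u|)|u|^{-2}\,|\rmd u|=\infty$. The divergence is logarithmic, and it appears in each of the two contour integrals of $C^{ij}$. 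It also appears in your bound for $\vartheta^j$, whose integrand contains $u+\rho(\widehat K*g_u)$. A sharper growth estimate will not fix this. Already for $H=1/2$, the explicit Heston Riccati solution gives $|(\widehat K*g_u)(t)|\approx|\operatorname{Im}u|$ for each fixed $t>0$ as $|\operatorname{Im}u|\to\infty$. Linear growth is therefore sharp and is \emph{not} absorbed by $|u|^{-2}$ decay. Any pointwise-in-$u$ estimate of $\widehat K*g_u$ that discards the decay of $\widetilde H^u_t$ in $\operatorname{Im}u$ cannot close the argument.

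The missing idea is to let the GKW isometry control $\widehat K*g_u$ for you. Since $\widetilde H^u$ is a square-integrable martingale with orthogonal decomposition $(\vartheta^u,L^u)$,
\[
\mathbb E\int_0^T|\vartheta^u_t|^2S_t^2V_t\,\rmd t+\mathbb E\int_0^T(1-\rho^2)|\widetilde H^u_t|^2\,|(\widehat K*g_u)(T-t)|^2\,V_t\,\rmd t=\mathbb E\bigl[|\widetilde H^u_T-\widetilde H^u_0|^2\bigr]\le\mathbb E\bigl[e^{2R_jX_T}\bigr].
\]
This holds uniformly in $u\in\mathcal S^j$, however fast $\widehat K*g_u$ grows. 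Cauchy--Schwarz in $\rmd t\otimes\rmd\mathbb P$ then bounds the inner expectation in your displayed integrability condition by a constant independent of $(u_i,u_j)$. The same argument covers the terms involving $L^0$ (for $B^j$) and $\vartheta^u$. Finiteness of $|\zeta^i|$ and $|\zeta^j|$ then gives Fubini.

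This uniform bound $\sup_{u\in\mathcal S^j}\mathbb E[|\widetilde H^u_t|^2]\le\mathbb E[e^{2R_jX_T}]$, together with joint measurability, is exactly what the paper checks before invoking the interchange theorem of \cite{ditella2019semistatic}. For the measurability you still need an actual argument; you only name it. The paper's argument is: Picard iterates make $(u,t)\mapsto g_u(t)$ Borel, a H\"older-continuous version of $\xi$ is fixed, and continuity plus adaptedness give joint predictability of $\widetilde H^u$ and $\vartheta^u$. Along this route your Doob and $(2+\varepsilon)$-moment estimates are not needed for the interchange. The paper uses the $(2+\varepsilon)$ condition only to check that each $\mathbb E[\widetilde H^{u_i}_t\widetilde H^{u_j}_tV_t]$ is finite.
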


\begin{proof}

The employed techniques are analogous to the ones in \cite[Proposition II.2.18, Theorem II.2.29]{martins2023aspects}.  Proposition~\ref{prop:square-int-mg} ensures that $S$, $H^0$, and 
$\widetilde H^u$, for $u\in\mathcal S^j$, are square-integrable martingales. 
We therefore first compute the Galtchouk--Kunita--Watanabe decompositions of 
$H^0$ and $\widetilde H^u$ with respect to $S$, and then use the Fourier 
representation of the contingent claim to obtain the desired hedging strategy.
We first verify that the Fourier-GKW interchange results of
\cite{ditella2019semistatic} apply in the present setting. 
By \cite[Proposition 2.11]{martins2023aspects}, the forward variance curve
can be chosen with almost surely Hölder-continuous paths in $(t,s)$, of any order
strictly below $\gamma/2$. We work with this version throughout. In particular,
$\xi$ is jointly measurable.

Fix $j\in\{1,\dots,d\}$ and $u\in\mathcal S^j$. The deterministic Riccati--Volterra solution $g_u$ can be chosen Borel
measurable in $(u,t)$. Indeed, this follows from the Picard construction and
uniqueness of the Riccati--Volterra solution: starting from $g_u^{(0)}=0$, each
Picard iterate is Borel measurable in $(u,t)$, and the pointwise limit is the unique
solution $g_u$. Therefore the affine representation
\[
    \widetilde H_t^u
    =
    \exp\left(
        uX_t+\int_t^T g_u(T-s)\xi_t(s)\,ds
    \right)
\]
defines a jointly progressively measurable version of $\widetilde H^u$. 
Furthermore, for $u=R_j+iy\in\mathcal S^j$, conditional Jensen's inequality gives
\[
    \mathbb E[|\widetilde H_t^u|^2]
    =
    \mathbb E\left[
        \left|
        \mathbb E[e^{uX_T}\mid\mathcal F_t]
        \right|^2
    \right]
    \le
    \mathbb E[e^{2R_jX_T}]
    <\infty.
\]
Hence $\sup_{u\in\mathcal S^j}
    \mathbb E[|\widetilde H_t^u|^2]
    <\infty.$

Recall $H^0$ is given by equation~\eqref{eq:variance-swap-t}
    \[H^0_t = \int_0^t V_s \rmd s + \int_t^T \xi_t(s) \rmd s -k.\]
    By Proposition~\ref{dynamics int xi}, we have
    \[\rmd \left( \int_t^T \xi_t(s) \rmd s \right)=-V_t \rmd t + (\widehat K * 1)(T-t) \sqrt{V_t} \rmd B_t,\]
    thus the dynamics of $H^0$ is given by 
    \[\rmd H^0_t = (\widehat K * 1)(T-t) \sqrt{V_t} \rmd B_t.\]
    Recalling the dynamics of $S$ in equation~\eqref{eq:price-S}, and the expression for $\vartheta$ in Definition~\ref{def:GKW}, it follows
    \[\vartheta^0_t := \frac{\rmd \langle H^0, S \rangle_t }{\rmd \langle S, S \rangle_t} = \frac{\rho (\widehat K * 1)(T-t)}{S_t}.\]
    Moreover, given $L$ as in Definition~\ref{def:GKW},

    \begin{align*}\rmd L_t^0 &= \rmd H_t^0-\rmd (\vartheta^0 \cdot S)_t = (\widehat K * 1)(T-t) \sqrt{V_t} \rmd B_t - \frac{\rho (\widehat K * 1)(T-t)}{S_t} \rmd S_t\\
    &=(\widehat K * 1)(T-t) \sqrt{V_t} (\rmd B_t -\rho \rmd W_t). 
    \end{align*}
    Then
    \begin{align*}
        \frac{\rmd \langle L^0, L^0 \rangle_t}{\rmd t}=(1-\rho^2)((\widehat K * 1)(T-t))^2 V_t.
    \end{align*}

Similarly, we determine the dynamics of $\widetilde H^u_t$. We recall the expression for $\widetilde H_t^u$ in equation~\eqref{eq:H-tilde}
\[\exp\left( u X_t + \int_t^T g_{u}(T-s)\xi_t(s) \rmd s\right).\]
Exploiting Proposition~\ref{dynamics int xi}
\begin{align*}
    &\rmd \left( \int_t^T  g_{u}(T-s) \xi_t(s) \rmd s \right)= -g_{u}(T-t)V_t \rmd t + (\widehat K * g_{u})(T-t) \sqrt{V_t} \rmd B_t,
\end{align*}
and the dynamics of $X=\log S$ given by 

\[\rmd X_t = -\frac{1}{2} V_t \rmd t + \sqrt{V_t} \rmd W_t.\]

It follows that
\begin{align*}
    \rmd \widetilde H_t^u = \widetilde H_t^u(u\sqrt{V_t}\rmd W_t + (\widehat K * g_{u})(T-t) \sqrt{V_t}\rmd B_t).
\end{align*}Then 
\begin{align}
    \vartheta_t^u := \frac{\rmd \langle \widetilde H^u, S \rangle_t}{\rmd \langle S, S \rangle_t} = \frac{\widetilde H_t^u(u + \rho (\widehat K * g_{u})(T-t) )}{S_t}. \label{eq:vartheta_u}
\end{align}
Employing again the definition of $L$ given in Definition~\ref{def:GKW}, we write
\begin{align*}
    \rmd L_t^u &= \rmd \widetilde H_t^u - \rmd(\vartheta^u\cdot S)_t \\
    &= \widetilde H_t^u(u\sqrt{V_t}\rmd W_t + (\widehat K * g_{u})(T-t) \sqrt{V_t}\rmd B_t) -  \frac{\widetilde H_t^u(u + \rho (\widehat K * g_{u})(T-t) )}{S_t} \rmd S_t \\
    & = \widetilde H_t^u (\widehat K * g_{u})(T-t) \sqrt{V_t} (\rmd B_t - \rho \rmd W_t).
\end{align*}
Then, recalling the expression for $L^0$, it follows
\begin{align*}
    \frac{\rmd \langle L^0, L^u \rangle_t}{\rmd t} & = (1-\rho^2) (\widehat K * 1)(T-t) (\widehat K * g_{u})(T-t) \widetilde H_t^u V_t,
\end{align*}
and
\begin{align*}
    \frac{\rmd \langle L^{u_i}, L^{u_j} \rangle_t}{\rmd t} & = (1-\rho^2) (\widehat K * g_{u_i})(T-t) (\widehat K * g_{u_j})(T-t) \widetilde H_t^{u_i}  \widetilde H_t^{u_j} V_t.
\end{align*}

{\color{black}It remains to verify the joint predictability assumption in
\cite[Theorem 4.6]{ditella2019semistatic}. By the version fixed above,
    $(t,\omega,u)\mapsto \widetilde H_t^u(\omega)$ is jointly measurable, and for each fixed $u$,
the process $\widetilde H^u$ has continuous adapted paths by the stochastic differential
representation just derived. Hence $(t,\omega,u)\mapsto \widetilde H_t^u(\omega)$ is jointly predictable, for instance by the
standard dyadic approximation argument.

Since $S$ is continuous, adapted and strictly positive, and since
$(u,t)\mapsto(\widehat K*g_u)(T-t)$ is Borel measurable, the explicit formula for $\vartheta^u$, equation~\eqref{eq:vartheta_u},
shows that $(t,\omega,u)\mapsto\vartheta_t^u(\omega)$ is jointly predictable.}

Recalling the definition of $A, B, C$ in equation~\eqref{eq:ABC}, and the interchange between the GKW decomposition and the Fourier representation, we can write:
 \begin{align*}
    \vartheta^j_t&= \int_{\mathcal{S}^j} \vartheta^u_t\, \zeta^j(\rmd u), \\
        A&=\int_0^T\mathbb{E}\left[\frac{\rmd \langle L^0, L^0 \rangle_t}{\rmd t}\right] \rmd t,\\
        B^j&=\int_0^T \int_{\mathcal{S}^j}\mathbb{E}\left[\frac{\rmd\langle L^0, L^{u} \rangle_t}{\rmd t}\right] \zeta^j(\rmd u) \rmd t,\\
        C^{i,j}&=\int_0^T \int_{\mathcal{S}^i} \int_{\mathcal{S}^j} \mathbb{E}\left[\frac{\rmd\langle L^{u_i}, L^{u_j}\rangle_t}{\rmd t}\right] \zeta^j(\rmd u_j) \zeta^i(\rmd u_i)\rmd t,
    \end{align*}
    for $i,j=1, \dots, d.$ The statement of the theorem follows by substituting the expressions obtained for 
$\vartheta_t^u$ and for the predictable covariations.

Lastly, we verify the expectations in the statement are actually finite. The expectation of $V_t$ follows directly by the definition of the forward variance curve, equation~\eqref{forward var}. Recalling that $V$ is square integrable by \cite[Lemma 3.1 and Theorem 3.4]{abi2019affine}, we can write

\[\mathbb{E}[|\widetilde H_t^{u_j} V_t|]\leq (\mathbb{E}[|\widetilde H_t^{u_j}|^2])^{1/2}(\mathbb{E}[V_t^2])^{1/2} \leq (\mathbb{E}[e^{2\operatorname{Re}u_j X_T}])^{1/2}(\mathbb{E}[V_t^2])^{1/2}.\]

Finally, let 
$p = 2+\varepsilon, \,
r = \frac{2+\varepsilon}{\varepsilon}$.
Then $1/p + 1/p + 1/r = 1$, so by Hölder's inequality,
\[
\mathbb{E}\!\left[\, |\widetilde H_t^{u_j} \widetilde H_t^{u_i} V_t| \,\right]
    \leq (\mathbb{E}[|\widetilde H_t^{u_j}|^{2+\varepsilon}])^{\frac{1}{2+\epsilon}} (\mathbb{E}[|\widetilde H_t^{u_i}|^{2+\varepsilon}])^{\frac{1}{2+\epsilon}}
      (\mathbb{E}[V_t^{\frac{2+\epsilon}{\epsilon}}])^{\frac{\epsilon}{2+\epsilon}},
\]
where the latter is finite due to \cite[Lemma 3.1 and Theorem 3.4]{abi2019affine}, noticing that $\frac{2+\varepsilon}{\varepsilon}>2$ by hypothesis.
\end{proof}

\begin{example}\label{ex:strat-for-frac-kernel}
    In the fractional case, see Example~\ref{fractional kernel}, $\widehat K=\sigma t^{H-\frac{1}{2}} E_{H+1/2, H+1/2}(-\lambda t^{H+\frac{1}{2}})$, so 
\[(\widehat K * 1) (T-t)=\int_0^{T-t} \widehat K(s) \rmd s = \sigma (T-t)^{H+\frac{1}{2}} E_{H+1/2, H+3/2}(-\lambda (T-t)^{H+\frac{1}{2}}).\]
Moreover, recall that $\mathbb{E}[V_t]=\xi_0(t)=V_0 + \lambda(\theta-V_0)t^{H+\frac{1}{2}} E_{H+1/2, H+3/2}(-\lambda t^{H+\frac{1}{2}})$. So for the fractional case the hedging strategies become
\begin{equation}
\begin{aligned}
\vartheta^0_t & =\frac{\sigma (T-t)^{H+\frac{1}{2}} E_{H+1/2, H+3/2}(-\lambda (T-t)^{H+\frac{1}{2}})\rho}{S_{t}},\\
\vartheta^j_t&=\frac{1}{S_{t}}\int_{\mathcal{S}^j}\left( \widetilde H_t^{u} (u + \rho (\widehat K * g_{u})(T-t)\right) \zeta^j(\rmd u), \\
    A&= \int_0^T (1-\rho^2)(\sigma (T-t)^{H+\frac{1}{2}} E_{H+1/2, H+3/2}(-\lambda (T-t)^{H+\frac{1}{2}}))^2 (V_0 + \lambda(\theta-V_0)t^{H+\frac{1}{2}} E_{H+1/2, H+3/2}(-\lambda t^{H+\frac{1}{2}}))\rmd t, \\
    B^j&=\int_0^T \int_{\mathcal{S}^j} (1-\rho^2) (\widehat K * g_{u})(T-t) \sigma (T-t)^{H+\frac{1}{2}} E_{H+1/2, H+3/2}(-\lambda (T-t)^{H+\frac{1}{2}}) \mathbb{E}[\widetilde H^{u}_t V_t ] \zeta^j(\rmd u) \rmd t,\\
    C^{i,j}&=\int_0^T  \int_{\mathcal{S}^i} \int_{\mathcal{S}^j} (1-\rho^2) (\widehat K * g_{u_i})(T-t)(\widehat K * g_{u_j})(T-t) \mathbb{E}[\widetilde H^{u_i}_t \widetilde H^{u_j}_t V_t] \zeta^i(\rmd u_i) \zeta^j(\rmd u_j) \rmd t. 
\end{aligned}
\label{eq:strategies-frac-case}
\end{equation}
\end{example}

\section{Numerical results}\label{sec:numerical-results}

In this section, we first address the numerical computation of the hedging strategies. 
We then introduce the optimal claim selection problem and provide a numerical analysis of the optimal choice of contingent claims for different values of the Hurst index $H$.

\subsection{Numerical computations of the strategies}

Dynamic strategies are determined by coupling the Euler scheme for the simulation of $S$ with numerical quadrature for the integrals over the complex contour $\mathcal{S}^{j}$. Conversely, static strategies are computed using Monte Carlo simulations to estimate the expectations, combined with numerical quadrature for integration both in time and over the complex contours.

Throughout the numerical experiments, we consider the fractional kernel
$K(t)=\frac{t^{H-\frac12}}{\Gamma(H+\frac12)},\, H\in(0,1/2)$,
as in Example~\ref{fractional kernel}.

We first choose a maturity $T$ such that the assumptions of Theorem~\ref{thm:strategies-rough-heston} are satisfied. 
In particular, $S$, $H^0$, and $\widetilde H^{u_j}$ are required to be square-integrable martingales for the relevant values of $u_j$, and the expectations appearing in Theorem~\ref{thm:strategies-rough-heston} must be well defined. 
We then introduce the uniform time grid
\[
0=t_0<t_1<\cdots<t_N=T, 
\qquad 
t_k=k\Delta t, 
\qquad 
\Delta t=\frac{T}{N}.
\]

We need to simulate $(S, V,\widetilde H^{u_j})$ for each $u_j\in\mathcal S_j$. 
We first describe the simulation of the rough Heston dynamics. 
To this end, we use the multifactor approximation proposed in \cite{abi2019multifactor}. 
The main idea is to approximate the fractional kernel by a finite sum of exponentials,
\[
K^n(t)=\sum_{i=1}^n c_i^n e^{-\gamma_i^n t}.
\]
The coefficients are defined by
\begin{equation}\label{coeff multifactor approximation}
c_i^n = \int_{\eta_{i-1}^n}^{\eta_i^n} \mu(\mathrm d\gamma),
\qquad
\gamma_i^n 
= \frac{1}{c_i^n}
\int_{\eta_{i-1}^n}^{\eta_i^n} \gamma \, \mu(\mathrm d\gamma),
\end{equation}
where
$\mu(\mathrm d\gamma)
=
\frac{\gamma^{-H-\frac12}}
{\Gamma(H+\frac12)\Gamma(\frac12-H)}
\,\mathrm d\gamma.$
The partition points are given by $\eta_i^n=i\pi_n$, $i=0,\dots,n$, with\\
$\pi_n
=
\frac{n^{-1/5}}{T}
\left(
\frac{\sqrt{10}(1-2H)}{5-2H}
\right)^{2/5}.$

This approximation leads to the multifactor stochastic volatility model $(X^n,V^n)$, given by
\begin{equation} \label{path x}
\rmd X_t^n
=
-\frac12 V_t^n\,\rmd t
+
\sqrt{V_t^n}\,\rmd W_t,
\qquad
V_t^n
=
V_0
+
\lambda\theta\int_0^t K^n(t-s)\,\rmd s
+
\sum_{i=1}^n c_i^n V_t^{n,i},
\end{equation}
where
\begin{equation} \label{eq:sim-V^n}
\rmd V_t^{n,i}
=
\left(
-\gamma_i^n V_t^{n,i}
-\lambda V_t^n
\right)\rmd t
+
\sigma\sqrt{V_t^n}\,\rmd B_t,
\qquad
V_0^{n,i}=0.
\end{equation}
The convergence of this multifactor approximation to the original rough volatility model is established in \cite[Theorem~3.5]{abi2019multifactor}.

After fixing the number $M$ of Monte Carlo trajectories and the number $n$ of factors, we simulate $M$ approximate paths of $(X^n,V^n)$ on the time grid using an Euler scheme. 
Since the discretized process $V^n$ in \eqref{eq:sim-V^n} may become negative, we follow \cite[Appendix~A.4]{abi2019lifting} and replace $\sqrt{\cdot}$ by $\sqrt{\max(0,\cdot)}$. For the dynamic strategies we use $S^n=e^{X^n}$.

We now describe the computation of $\widetilde H^{u_j}$. 
For each contour $\mathcal S^j=R_j+i\mathbb R$, we truncate the imaginary axis and approximate $\mathcal S^j$ by
$R_j+i[-L_j,L_j]$.
We then introduce the uniform grid
\[
u_{j,\ell}=R_j+i y_\ell, 
\qquad 
y_\ell=-L_j+\ell \Delta y_j,
\qquad 
\Delta y_j=\frac{2L_j}{N_j},
\quad \ell=0,\ldots,N_j.
\]
For every grid point $u_{j,\ell}$ and every time $t_k$, we compute
\[
\widetilde H_{t_k}^{u_{j,\ell}}
=
\exp\left(
u_{j,\ell}X_{t_k}
+
\int_{t_k}^T g_{u_{j,\ell}}(T-s)\xi_{t_k}(s)\,\rmd s
\right).
\]

The function $g_{u_{j,\ell}}$ is obtained from the Riccati--Volterra equation~\eqref{g 1u}. 
Rather than computing $g_u$ directly, we use the identity $g_u(t)=Q(u,\psi_u(t))$, as in \cite[Theorem~5.5]{larsson2023chapter}, where $\psi_u$ solves
\[
\psi_u(t)
=
\int_0^t K(t-s)F(u,\psi_u(s))\,\mathrm ds,
\]
with
$F(u,x)=\frac12(u^2-u)+(\rho\sigma u-\lambda)x+\frac12\sigma^2x^2, Q(u,x)=F(u,x)+\lambda x$.
Following \cite{abi2019affine}, $\psi_u$ is approximated by the multifactor system
$\psi_u^n(t)=\sum_{i=1}^n c_i^n \psi_u^{n,i}(t)$,
where
\[
\partial_t \psi_u^{n,i}
=
-\gamma_i^n\psi_u^{n,i}
+
F(u,\psi_u^n),
\qquad
\psi_u^{n,i}(0)=0,
\qquad i=1,\dots,n.
\]
Alternative numerical methods for the Riccati--Volterra equation include fractional Adams schemes 
\cite{diethelm2002predictor,diethelm2004detailed}, Padé approximation methods in the spirit of \cite{gatheral2019rational}, and hybrid methods such as those in \cite{callegaro2021fast}.

The computation of $\widetilde H_{t_k}^{u_{j,\ell}}$ is then performed pathwise. 
For each point $u_{j,\ell}$, we first solve the corresponding Riccati system on the time grid. 
We then approximate the forward variance curve $\xi_{t_k}(s)$, $s\in[t_k,T]$, using the simulated paths. 
In the implementation, $\xi$ is stored as a tensor of size $(N+1)\times(N+1)\times M$, whose $(k,r,m)$-entry corresponds to $\xi_{t_k}(t_r)$ on the $m$-th trajectory for $r\geq k$, and is set to zero otherwise. 
The integral in the definition of $\widetilde H_{t_k}^{u_{j,\ell}}$ is then evaluated by numerical quadrature.

We finally summarize how the quantities in \eqref{eq:strategies-frac-case} are computed.
The strategy $\theta^0$ is obtained by evaluating the corresponding coefficients on the
time grid and dividing by the simulated stock price $S^n$. The computation of
$\theta^j$ requires both $S^n$ and $\widetilde H^{u_{j,\ell}}$; the integral over the
complex contour is then evaluated by standard quadrature.

The coefficient $A$ is computed by evaluating the Mittag--Leffler function on the time
grid and applying numerical quadrature in time. For the matrices $B$ and $C$, the
expectations
\[
    \mathbb E[\widetilde H_t^{u_i}V_t],
    \qquad
    \mathbb E[\widetilde H_t^{u_i}\widetilde H_t^{u_j}V_t],
\]
are estimated by Monte Carlo averages over the simulated trajectories. These estimates
are then inserted into the expressions for $B$ and $C$, while the remaining time and
contour integrals are approximated by standard quadrature rules.

\paragraph{Parameters and numerical choices}

The model parameters used in the numerical experiments are reported in Table~\ref{tab:model_parameters}. 
They are chosen consistently with \cite{abi2019affine}. We do not report the Hurst index $H$, since different roughness regimes will be compared.
We verify that, for these parameters and for the maturities considered below, the moment conditions required in Theorem~\ref{thm:strategies-rough-heston} are satisfied. 
In particular, using Remark~\ref{rem:explosion-time-heston} and the comparison with the classical Heston explosion time in \cite{gerhold2019moment}, the maturity $T$ is chosen so that the relevant Fourier contours lie inside the admissible domain.

\begin{table}[h!]
\centering
\begin{tabular}{c|ccccccc}
\hline
Parameter 
& $\lambda$ & $\rho$ & $\sigma$ & $V_0$ & $X_0$ & $\theta$ \\
\hline
Value 
& $0.3$ & $-0.7$ & $0.3$ & $0.02$ & $\ln(100)$ & $0.066$ \\
\hline
\end{tabular}
\caption{Rough Heston parameters.}
\label{tab:model_parameters}
\end{table}

Table~\ref{tab:model_numeric_parameters} reports the numerical parameters used in the simulation: the number $M$ of Monte Carlo trajectories, the number $N$ of time steps, the number $n$ of factors in the multifactor approximation, and the parameters used for the truncation and discretization of the Fourier contours.

The static hedging instruments are chosen to be European call and put options. 
For these claims, we use the Fourier representations and the associated complex measures derived in Remark~\ref{rem:call-put-fourier}. We use one common contour for all calls,
$\mathcal S^C=R_C+i\mathbb R$ with $R_C>1$, and one common contour for all puts,
$\mathcal S^P=R_P+i\mathbb R$ with $R_P<0$.
The strike affects only the
measure $\zeta^i$.

In the numerical implementation, the call contour is truncated to
$R_C+i[-L_C,L_C],$
and the interval $[-L_C,L_C]$ is discretized uniformly with $N_C$ points. 
The same procedure is applied to the put contour, with truncation level $L_P$ and discretization size $N_P$. 

\begin{table}[h!]
\centering
\begin{tabular}{c|ccccccccc}
\hline
Parameter 
& $M$ & $N$ & $n$ & $N_C$ & $L_C$ & $R_C$ & $N_P$ & $L_P$ & $R_P$ \\
\hline
Value 
& $10^4$ & $250$ & $500$ & $100$ & $150$ & $2$ & $100$ & $150$ & $-0.1250$ \\
\hline
\end{tabular}
\caption{Numerical parameters used in the experiments.}
\label{tab:model_numeric_parameters}
\end{table}

The integrals defining the multifactor coefficients in \eqref{coeff multifactor approximation} are computed using the Python routine \verb|integrate.quad|. 
As recalled in Example~\ref{fractional kernel}, in the fractional case the resolvent of the second kind is expressed in terms of the Mittag--Leffler function. 
We evaluate it using the package \verb|pymittagleffler 0.2.1|, based on the algorithm of \cite{garrappa2015numerical}. 
Monte Carlo expectations are computed using \verb|np.mean|, while the time and contour integrals appearing in $A$, $B$, and $C$ are approximated using the trapezoidal rule, implemented through \verb|np.trapz|. 
For the parameter values in Table~\ref{tab:model_numeric_parameters}, the simulation of $M=10^4$ paths takes less than one minute, while the computation of $A$, $B$, and $C$ takes approximately one hour.

\subsection{Optimal selection}

We numerically solve the optimal selection problem introduced in
Section~\ref{sec:optimal-selection}. The set of candidate contingent claims consists of
European put and call options with strikes spaced by $5$. More precisely, we consider
put options with strikes $K \in \{50,55,\ldots,95\}$,
and call options with strikes
$K \in \{100,105,\ldots,150\}$.
In total, we have $d=21$ candidate instruments.

For each prescribed basket size $m$, we perform a search over all subsets
of cardinality $m$. Namely, among the $\binom{21}{m}$ admissible combinations, we
select the subset which yields the smallest residual hedging error. The analysis is
carried out for three values of the Hurst parameter, $H=0.1, H=0.3, H=0.5.$
The last case corresponds to the classical Heston model. In this Markovian setting,
closed-form solutions to the Riccati equations are available, together with explicit formulae for the expectation terms appearing in the matrices $B$ and $C$
see, for instance, \cite{ditella2020semistatic}. We exploit this structure in the
computation of the optimal strategy for $H=0.5$, employing the same computational methods as in  \cite{ditella2020semistatic}. We note that computaitonal time for optimal claim selection is around 10 minutes for every set of parameters, taking into account the code has been parallelized.

\paragraph{Neuberger's replication formula}
Before illustrating the numerical results we recall the model-free replication formula of
Neuberger~\cite{neuberger1994log}. In a continuous time, zero-interest-rate setting,
Neuberger formula reads
\begin{equation}\label{eq:neu-formula}
[X,X]_T
=
2\int_0^T \left(\frac{1}{S_t} - \frac{1}{S_0} \right)dS_t
+
2\int_0^{S_0}\frac{(K-S_T)^+}{K^2}\,dK
+
2\int_{S_0}^{\infty}\frac{(S_T-K)^+}{K^2}\,dK.
\end{equation}
The equality indicates that the semi-static replication of a variance swap
is obtained from a continuum of out-of-the-money options with strikes $K$ ranging from $0$ to infinity and  positive weights
proportional to $K^{-2}$. In contrast, we use $21$ strikes between $50$ and $150$, selecting weights by minimizing the quadratic hedging error. Even though we consider the discretized problem, since the rough Heston model has continuous paths,
we expect the optimal portfolios to reproduce a behavior consistent with the Neuberger formula.

\paragraph{Claim selection}
We perform two types of experiments. In the first one, both long and short positions are
allowed. In the second one, mimicking the Neuberger formula, we impose the additional constraint that only long positions
are admissible. The corresponding results are reported in
Figure~\ref{fig:optimal_sel_unconstrained} and Figure~\ref{fig:optimal_sel_longonly},
respectively.

\begin{figure}[!h]
    \centering
    \includegraphics[width=0.32\textwidth]{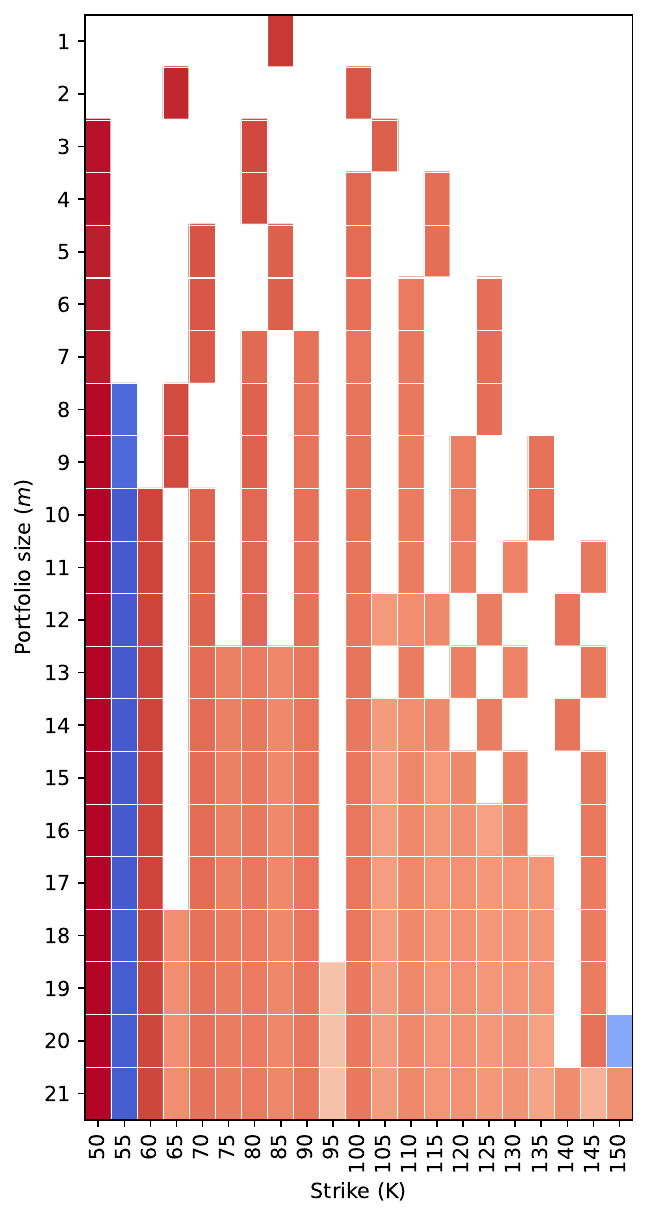}
    \includegraphics[width=0.32\textwidth]{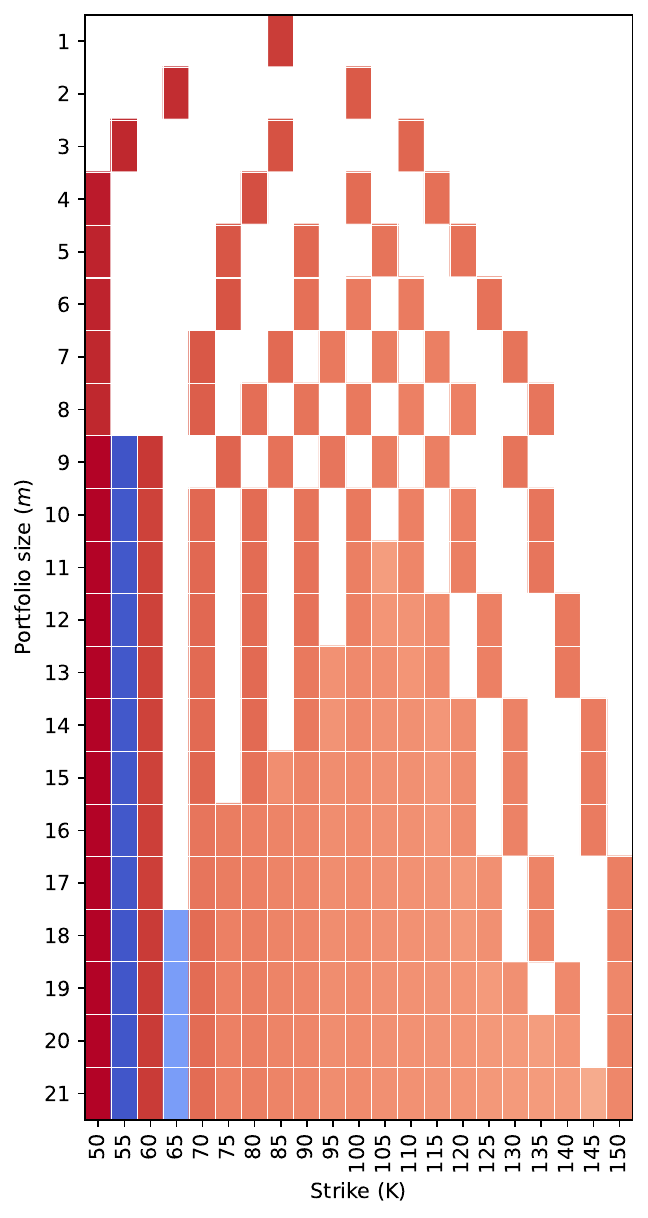}
    \includegraphics[width=0.32\textwidth]{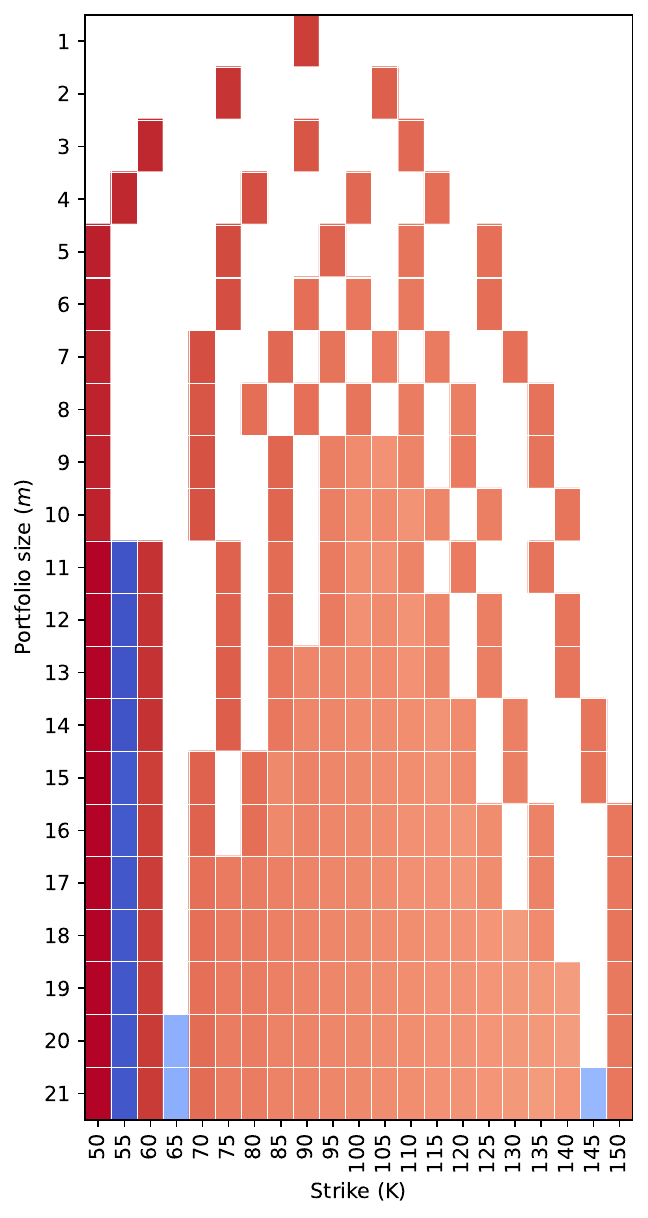}
    \caption{Composition of the optimal static portfolio as a function of the basket size
    $m$, for $H=0.1$, $H=0.3$, and $H=0.5$. Long and short positions are allowed.}
    \label{fig:optimal_sel_unconstrained}
\end{figure}

\begin{figure}[!h]
    \centering
    \includegraphics[width=0.32\textwidth]{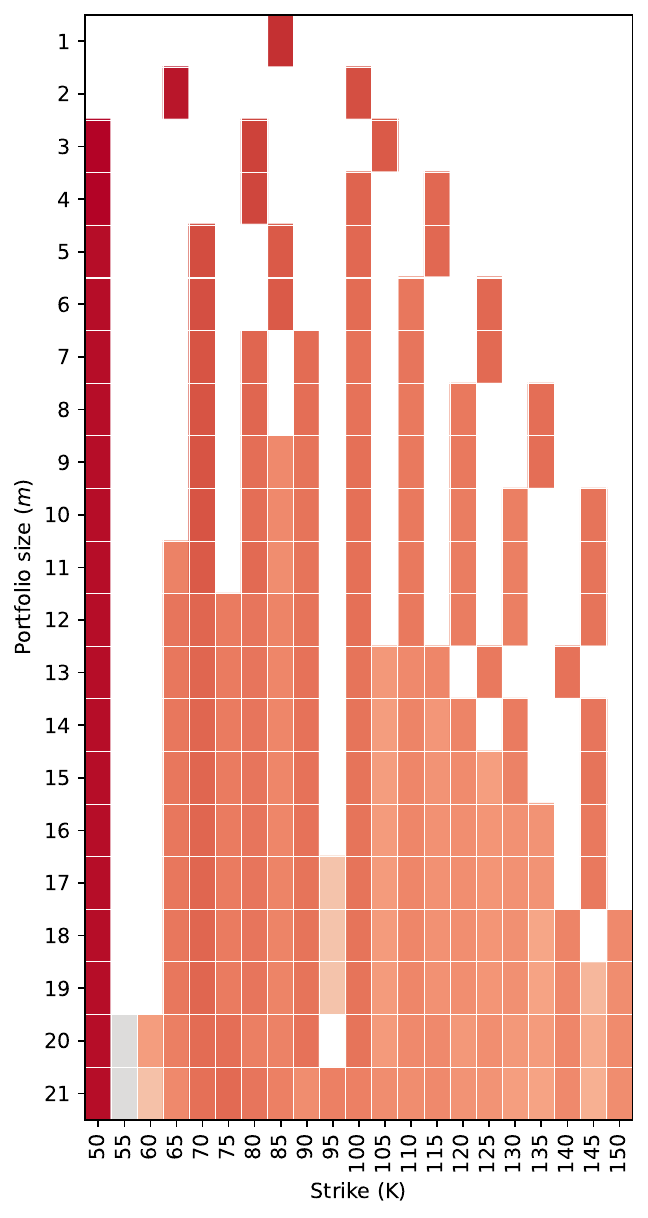}
    \includegraphics[width=0.32\textwidth]{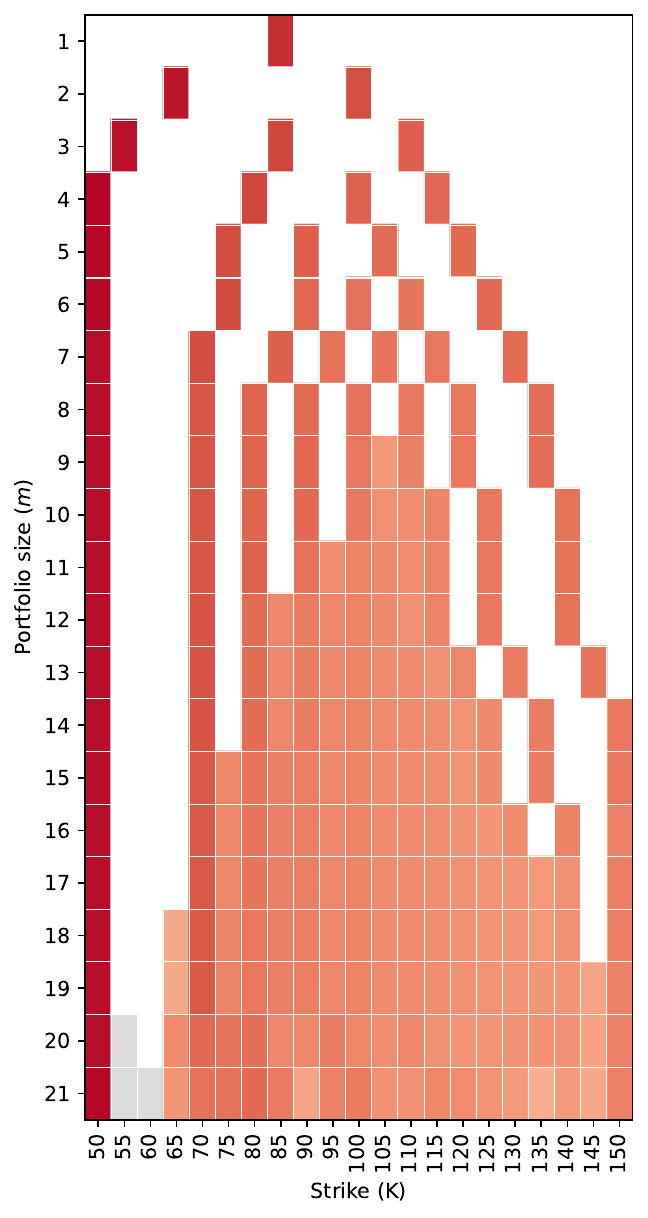}
    \includegraphics[width=0.32\textwidth]{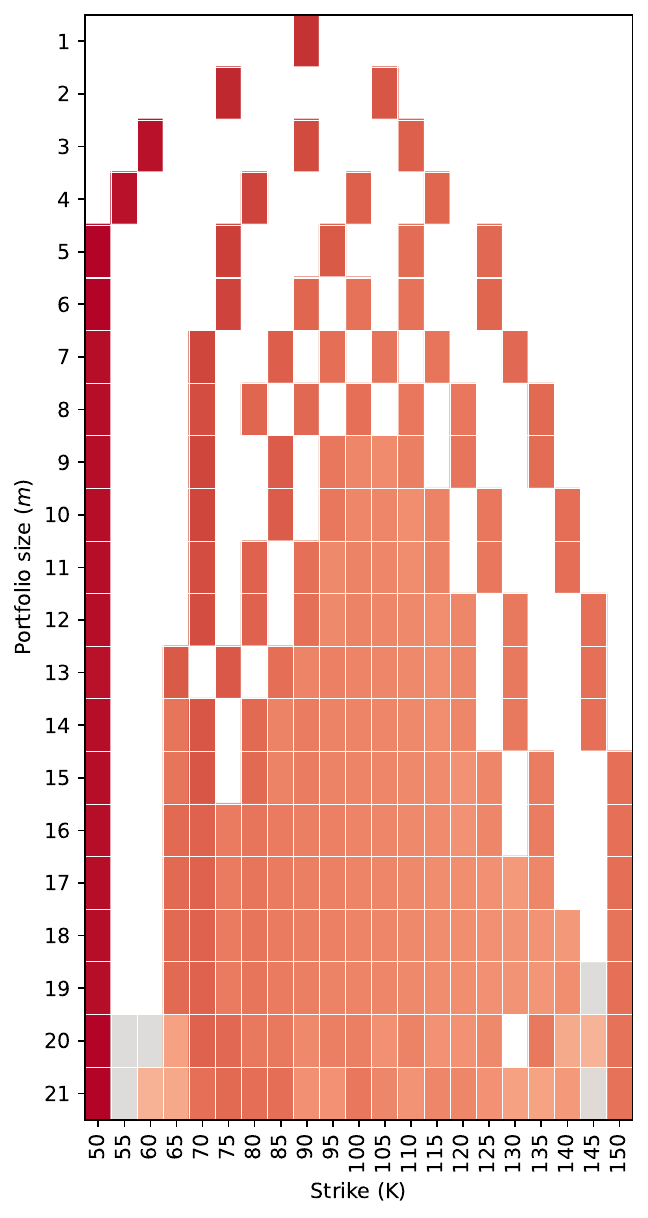}
    \caption{Composition of the optimal static portfolio as a function of the basket size
    $m$, for $H=0.1$, $H=0.3$, and $H=0.5$. Only long positions are allowed.}
    \label{fig:optimal_sel_longonly}
\end{figure}

The unconstrained solutions are mostly characterized by long positions. Short positions
appear only at a small number of strikes, mostly out-of-the-money. These short positions are isolated and we numerically verify that they do not strongly reduce the
overall resulting hedging error. As observed in
\cite{ditella2020semistatic}, this suggests that their occurrence is mainly due to
numerical artifacts. For this reason, it is natural to also consider the long-only
specification, which is financially interpretable in the sense of the Neuberger replication in \eqref{eq:neu-formula}.

Across all three values of the Hurst parameter, the optimal portfolios assign larger
positions to out-of-the-money puts than to out-of-the-money calls. Moreover, the
selection pattern is qualitatively similar across the three roughness regimes. For small
basket sizes, the selected options are typically concentrated around the at-the-money
region. As $m$ increases, the selected strikes progressively move towards both tails
of the strike grid. This expansion is more pronounced on the put side: put options are
selected earlier and usually receive larger positions.

A further difference concerns the speed at which the algorithm includes the lowest
available put strikes. In the roughest case, $H=0.1$, the lowest put strike enters the
optimal portfolio already for $m=3$. For $H=0.3$, this occurs for $m=4$, while in
the classical Heston case $H=0.5$, it occurs only for $m=5$.

\paragraph{Residual hedging error}
Figure~\ref{fig:error} reports, on a log scale, the optimal residual
hedging error as a function of the basket size $m$ for the long-only setting.
For notational convenience, we denote this error by $\varepsilon^2(m)$
throughout this section.
As expected, the error decreases as $m$ increases,
since enlarging the static portfolio expands the admissible set of hedging strategies.
The decrease is typically more pronounced for small values of $m$, where the addition
of one option has a substantial marginal effect. For larger basket sizes, the curve
flattens, suggesting that most of the relevant hedging contribution is already captured
by a relatively small number of well-selected instruments. To complement
Figure~\ref{fig:error}, Table~\ref{tab:hedging-errors} reports the
optimal residual hedging error $\varepsilon^2(m)$ for selected basket sizes,
namely $m=0,1,2,3$ and $m=12$, and for the three values of the Hurst parameter
considered in the numerical experiment. The case $m=0$ corresponds to the purely
dynamic strategy, $\varepsilon^2(0)=A$.

\begin{table}[!ht]
\centering
\begin{tabular}{c|ccccc}
\toprule
$H$ 
& $m=0$
& $m=1$ 
& $m=2$ 
& $m=3$ 
& $m=12$ \\
\midrule
$0.1$ 
& $4.8549\times 10^{-4}$
& $7.0312\times 10^{-5}$
& $2.8705\times 10^{-5}$
& $1.8352\times 10^{-5}$
& $1.3933\times 10^{-5}$ \\
$0.3$
& $3.8056\times 10^{-4}$
& $3.9411\times 10^{-5}$
& $1.2343\times 10^{-5}$
& $5.9245\times 10^{-6}$
& $2.5548\times 10^{-6}$ \\
$0.5$
& $2.8754\times 10^{-4}$
& $2.3140\times 10^{-5}$
& $5.9273\times 10^{-6}$
& $2.0377\times 10^{-6}$
& $9.5131\times 10^{-8}$ \\
\bottomrule
\end{tabular}
\caption{Optimal residual hedging error $\varepsilon^2(m)$ for selected basket sizes
and for the three Hurst regimes. The case $m=0$ corresponds to the purely dynamic
strategy, $\varepsilon^2(0)=A$.}
\label{tab:hedging-errors}
\end{table}
We observe that lower values of $H$ are typically associated
with larger residual hedging errors. This is consistent with the intuition that, in
rougher regimes, the variance process exhibits more irregular short-time behaviour,
making the residual volatility risk harder to hedge.
Moreover, the table suggests that the introduction of a small number of static claims is enough to
produce a substantial reduction of the hedging error.
To quantify this reduction, we define the hedging effectiveness
\begin{equation}\label{eq:delta_m}
\Delta(m) = \left(1-\frac{\varepsilon^2(m)}{A}\right)\times 100,
\end{equation}
which measures the percentage decrease in the hedging error obtained by including
the best $m$ static instruments, relative to the purely dynamic baseline
$A$. Figure~\ref{fig:error_rel} displays $\Delta(m)$ as a
function of $m$, while the corresponding values for selected basket sizes are
reported in Table~\ref{tab:hedging-effectiveness}.

\begin{table}[!ht]
\centering
\begin{tabular}{c|cccc}
\toprule
$H$ 
& $m=1$ 
& $m=2$ 
& $m=3$ 
& $m=12$ \\
\midrule
$0.1$ 
& $85.52$
& $94.09$
& $96.22$
& $97.13$ \\
$0.3$
& $89.64$ & $96.77$ & $98.44$
& $99.32$ \\
$0.5$
& $91.95$ & $97.94$&$99.29$
& $99.96$ \\
\bottomrule
\end{tabular}
\caption{Hedging effectiveness $\Delta(m)$, in percentage, for selected basket
sizes and for the three Hurst regimes.}
\label{tab:hedging-effectiveness}
\end{table}

In all three Hurst regimes, we conclude that "3 options are roughly enough", i.e. a hedging effectiveness above $95\%$ is already
achieved with $m=3$ static instruments, showing that the addition of only a limited number of well-selected options leads to a substantial improvement of the hedging
error. Starting from $m=5$, the improvement flattens, stabilizing around $97\%$
for $H=0.1$ and $99\%$ for $H=0.3$ and $H=0.5$.

To further assess how close the error is to its limiting value, we consider the
normalized hedging effectiveness
\begin{equation}\label{eq:eta_m}
\eta(m) = \left(\frac{A-\varepsilon^2(m)}
               {A-\varepsilon^2(21)}\right)\times 100,
\end{equation}
which measures the percentage of the total achievable error reduction that is
captured with $m$ static instruments. Here $\varepsilon^2(21)$ is used as a
proxy for the asymptotic error, since the curve in Figure~\ref{fig:error}
stabilizes for large $m$. By construction, $\eta(m)$ equals $0$ for $m=0$ and increases to $100$ as $m$ approaches the limiting basket size; large values of
$\eta(m)$ indicates that a large fraction of the total achievable error reduction is already captured by a basket of $m$ static instruments. Figure~\ref{fig:error_effect} displays $\eta(m)$ as a function of
$m$, while the corresponding values for selected basket sizes are reported in
Table~\ref{tab:hedging-efficiency}.

\begin{table}[!ht]
\centering
\begin{tabular}{c|ccc}
\toprule
$H$
& $m=1$
& $m=2$
& $m=3$ \\
\midrule
$0.1$ & 88.04 &96.87 &  99.06 \\
$0.3$ & 90.25 &  97.411 & 99.11 \\
$0.5$ & 91.98 & 97.97 &  99.32 \\
\bottomrule
\end{tabular}
\caption{Normalized hedging effectiveness $\eta(m)$ for selected basket sizes
and for the three Hurst regimes.}
\label{tab:hedging-efficiency}
\end{table}

The values of $\eta(m)$ confirm the flattening of the improvement observed
in Figure~\ref{fig:error}: in all three Hurst regimes, with $m=2$ static
instruments more than $95\%$ of the total achievable error reduction is
already captured, and with $m=3$ more than $99\%$. Beyond this point,
adding further instruments yields only marginal gains.

\paragraph{Optimal strategies for $m=3$}
As a further experiment, we analyze in more detail the optimal static and dynamic
strategies for a portfolio of size $m=3$. We again compare the three roughness
regimes $H=0.1$, $H=0.3$, and $H=0.5$.

Figure~\ref{fig:optimal_static_m3} displays the optimal static positions in a log-log plot,
with the strike on the horizontal axis and the corresponding portfolio weight on the
vertical axis. Across all three values of $H$, the optimal weights exhibit a slope that
is close to the theoretical slope implied by $\nu(K)$ proportional to $K^{-2}$ as in \eqref{eq:neu-formula}.
Indeed, taking logarithms gives
\[
    \log \nu(K) = \mathrm{const} - 2 \log K,
\]
so that the benchmark slope in the log-log plot is equal to $-2$. The numerical
results are consistent with this prediction, confirming that even when only a small
number of options is selected, the optimal static component retains the main structural
feature of Neuberger's continuous-strike formula.

Finally, we compare the optimal dynamic strategies
\[
    \vartheta^*
    =
    \vartheta^0-\sum_{j=1}^3 \nu_j^* \vartheta^j.
\]
The price paths are simulated using the same Brownian motions in
\eqref{path x} and \eqref{eq:sim-V^n}, so that the resulting strategies are directly
comparable across different values of $H$. In
Figure~\ref{fig:dyn_strat} we plot the optimal dynamic strategy for a single in the three Hurst regimes. In Figures~\ref{fig:dyn_strat_diff_01} and ~\ref{fig:dyn_strat_diff_03} we plot the relative difference between the optimal rough Heston dynamic strategy and the classical one 

\[\frac{  \vartheta^*-  \vartheta^*_{\text{Heston}}}{\vartheta^*_{\text{Heston}}},\]
for three different realized paths of $(S,V)$.

The dynamic component is more sensitive to the roughness parameter than the static one.
While it looks like the optimal static portfolios remain broadly similar across the three regimes, the
dynamic strategy becomes larger in absolute value as $H$ decreases. In particular, for $H=0.1$ the relative difference ranges between roughly $7\%$ and
$110\%$, while for $H=0.3$ it reaches values between 
$4\%$ and $50\%$: in the roughest regime the position in the underlying asset can be
more than twice as large as in the classical Heston case. These results suggest that the static part of the semi-static hedge appears largely insensitive to the roughness
parameter, while the dynamic part is more sensitive to the additional pathwise
fluctuations induced by rough volatility.

\begin{figure}[htbp]
    \centering

    \begin{minipage}{\linewidth}
        \centering
        \includegraphics[width=0.8\linewidth]{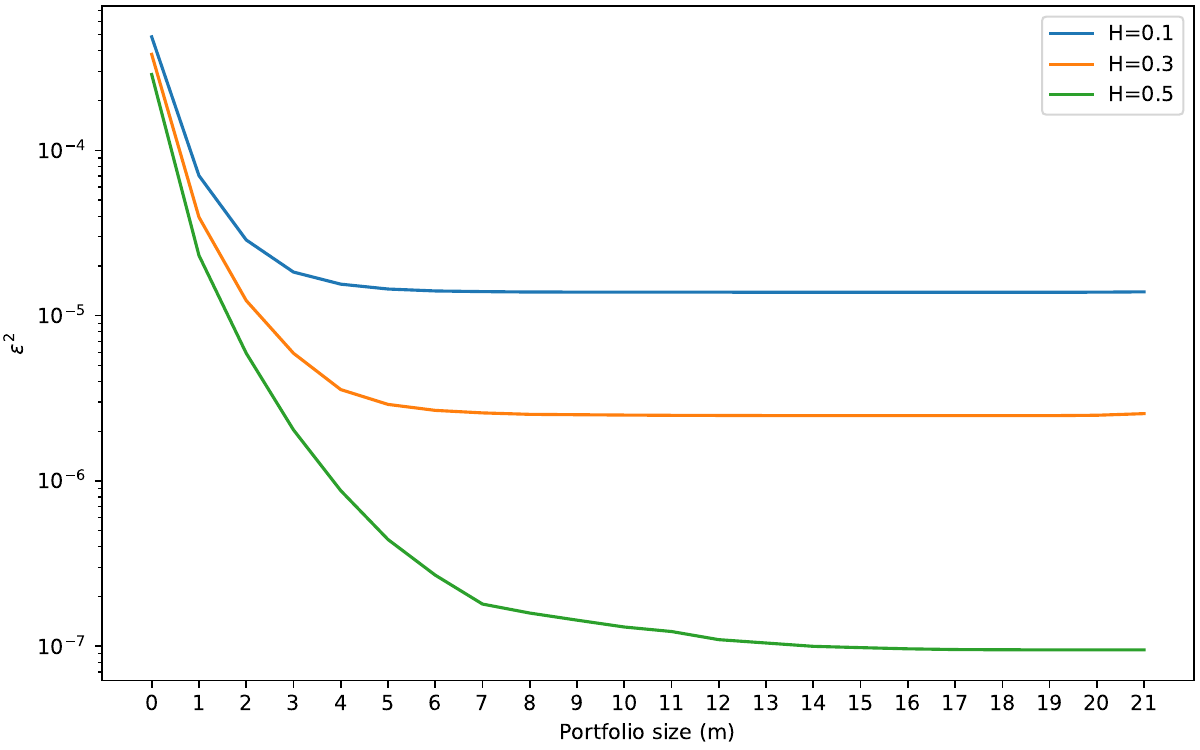}
        \caption{Optimal residual hedging error $\varepsilon^2(m)$ in log-plot for selected basket sizes
and for the three Hurst regimes.}
        \label{fig:error}
    \end{minipage}

    \vspace{0.5cm}

    \begin{minipage}{\linewidth}
        \centering
        \includegraphics[width=0.8\linewidth]{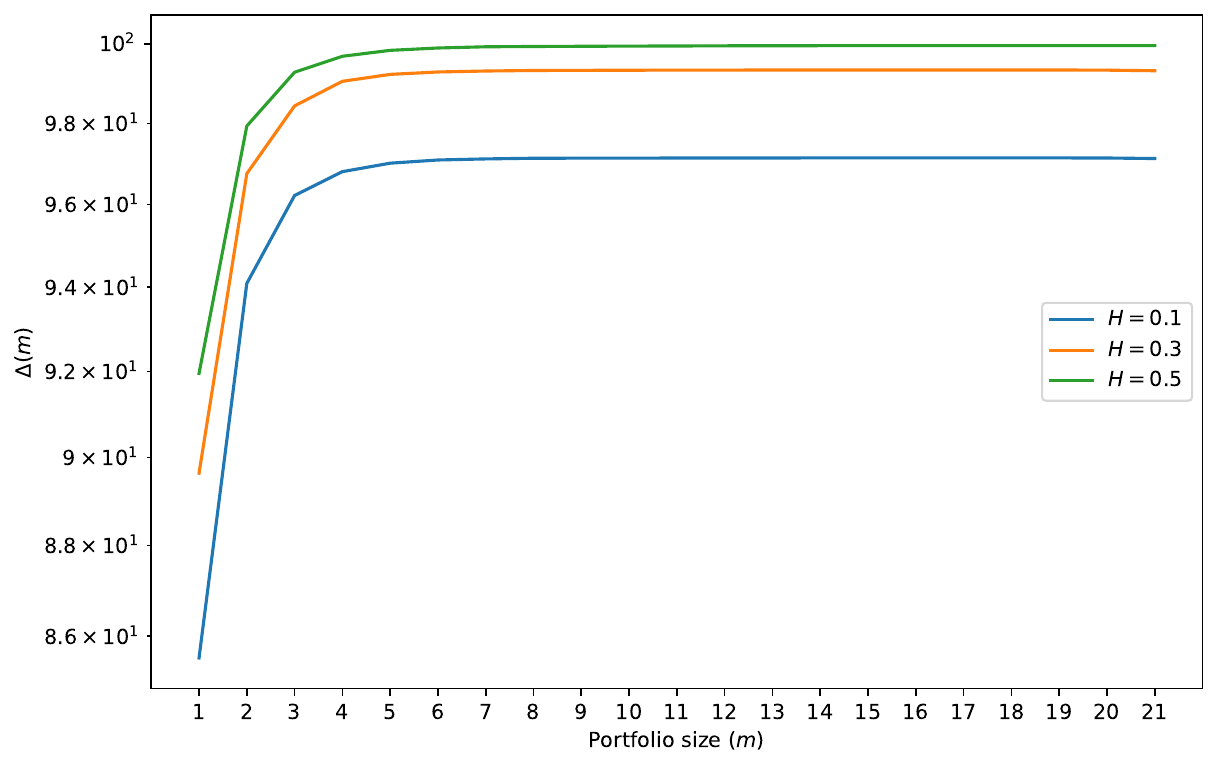}
        \caption{Hedging effectiveness $\Delta(m)$, equation~\eqref{eq:delta_m}, for the three Hurst regimes.}
        \label{fig:error_rel}
    \end{minipage}

\end{figure}

\begin{figure}

    \begin{minipage}{\linewidth}
        \centering
        \includegraphics[width=0.8\linewidth]{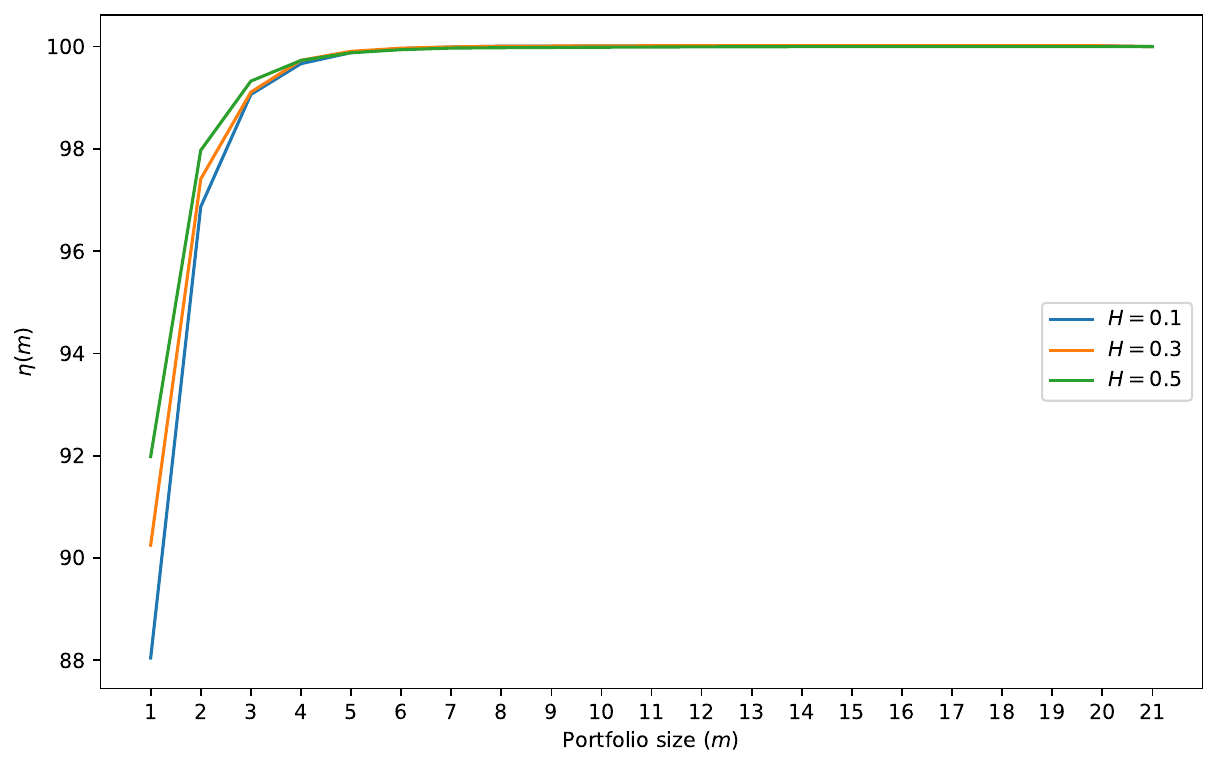}
        \caption{Normalized hedging effectiveness $\eta(m)$, equation~\eqref{eq:eta_m}, for the three Hurst regimes.}
    \end{minipage}
\label{fig:error_effect}
\end{figure}

\begin{figure}
    \centering

    \includegraphics[width=0.9\linewidth]{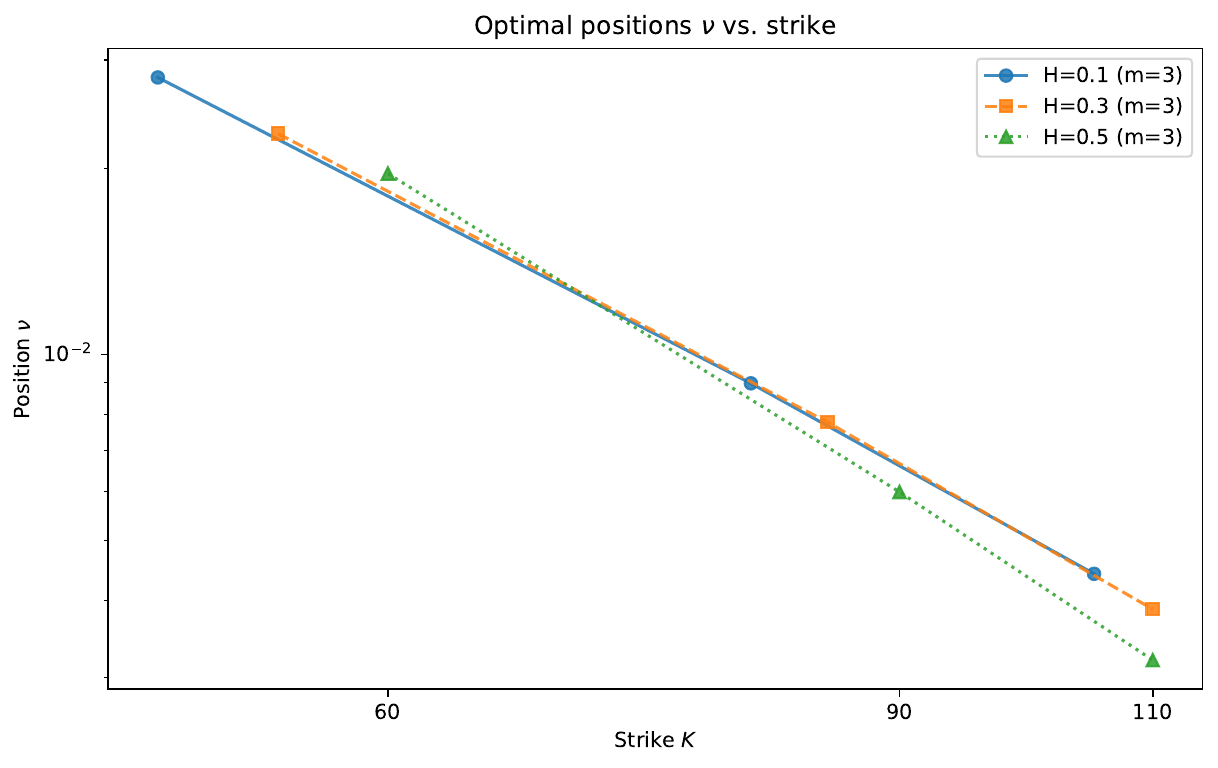}
    \caption{Portfolio weights $\nu(K)$ in log-log plot in the optimal hedging portfolios of effective size $m=3$.}
    \label{fig:optimal_static_m3}

\end{figure}

\begin{figure}
    \centering
    \includegraphics[width=0.8\linewidth]{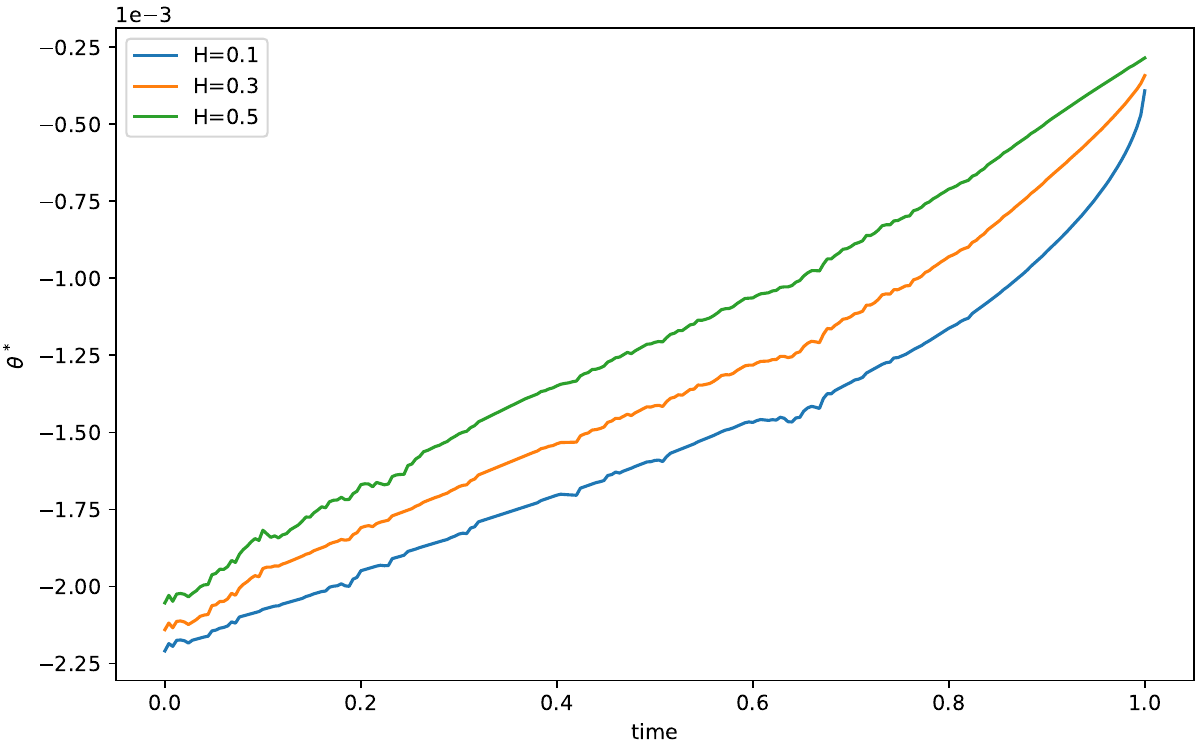}
    \caption{Optimal dynamic strategies $\theta^*$ in the optimal hedging portfolio of effective size $m=3$ for a single realized path of $(S,V)$.}
    \label{fig:dyn_strat}
\end{figure}

\begin{figure}[htbp]
    \centering

    \begin{minipage}{\linewidth}
        \centering
        \includegraphics[width=0.8\linewidth]{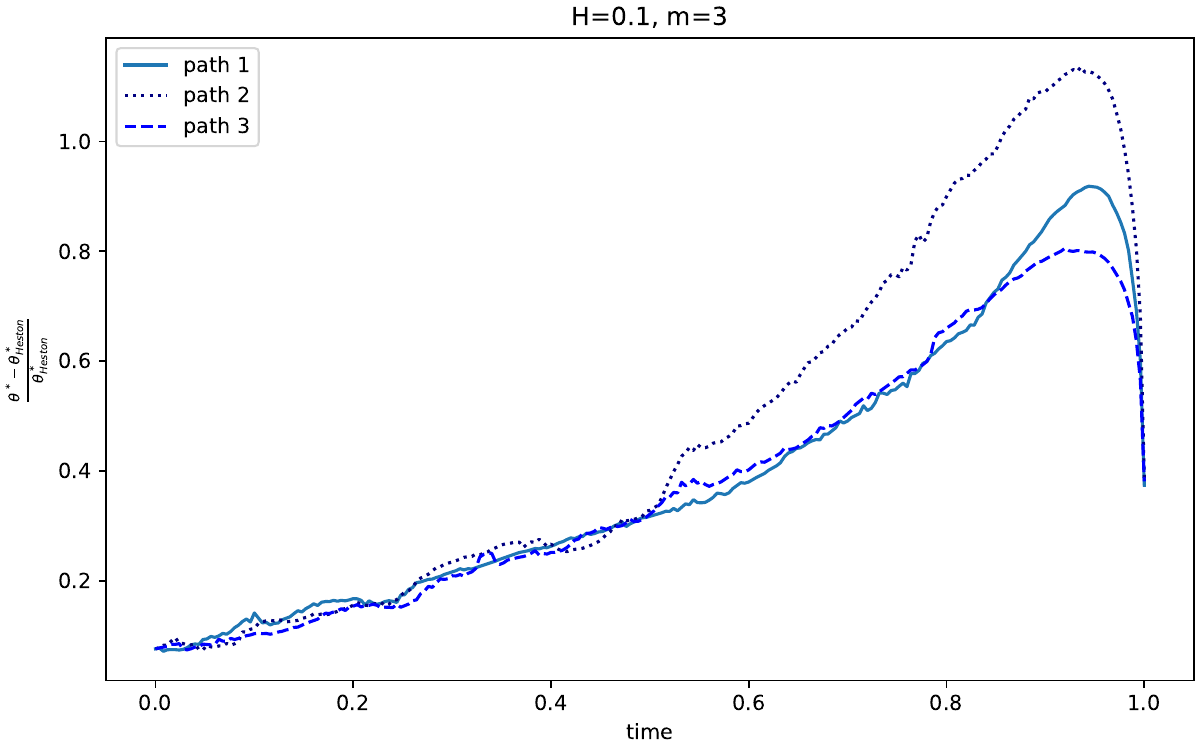}
        \caption{Relative dynamic hedge difference in the optimal hedging portfolio of effective size $m=3$ for $H=0.1$ and three realized paths of $(S,V)$.}
        \label{fig:dyn_strat_diff_01}
    \end{minipage}

    \vspace{0.5cm}

    \begin{minipage}{\linewidth}
        \centering
        \includegraphics[width=0.8\linewidth]{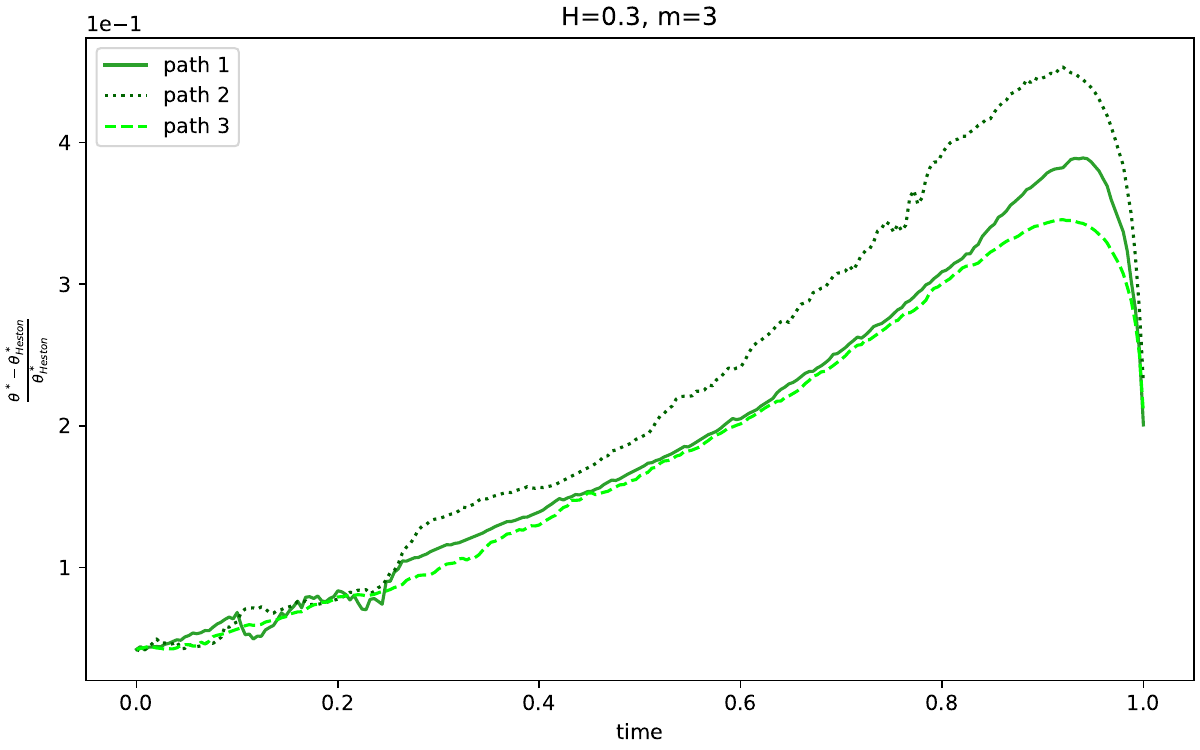}
        \caption{Relative dynamic hedge difference in the optimal hedging portfolio of effective size $m=3$ for  $H=0.3$ and three realized paths of $(S,V)$.}
        \label{fig:dyn_strat_diff_03}
    \end{minipage}

\end{figure}

\section{Conclusions}\label{sec:conclusions}

We have derived semi-static variance-optimal hedging strategies for a variance
swap in the rough Heston model. The hedge combines dynamic trading in the
underlying asset with static positions in a finite basket of European options.
By combining the affine transform formula of the rough Heston model with the
Fourier--GKW methodology of \cite{ditella2019semistatic}, we obtained
semi-explicit formulas for the optimal hedging strategies.

The resulting formulas are expressed in terms of the forward variance curve,
the Riccati--Volterra solution and the conditional Laplace transform. The strategies are numerically computed by means of a multifactor
approximation of the fractional kernel, combined with Monte-Carlo simulations and standard quadrature methods.

Three findings emerge from the numerical analysis. First, a basket of three
optimally selected options reduces the hedging error by more than $95\%$
compared to the purely dynamic hedge in all roughness regimes considered, and
enlarging the basket further yields only marginal improvements. The selected
instruments are out-of-the-money options with weights decaying approximately
as $K^{-2}$, consistently with Neuberger variance swap replication.

Second, roughness affects the size of the residual error rather than the
composition of the basket: the optimal static portfolio is stable with respect to the Hurst index, while the residual error becomes larger in
rougher regimes.

Third, the dynamic component of the hedge becomes larger in magnitude as the
Hurst index decreases: rougher regimes require heavier positions in
the underlying asset in order to offset the faster fluctuations of the
variance process.

Possible extensions of this work consist in considering the same problem under
different rough models, for instance in the setting of rough affine models
with jumps as in \cite{bondi2024rough}, or in extending the techniques to
models whose characteristic function is not available explicitly, such as rough
Bergomi \cite{bayer2016pricing}, or rough SABR \cite{fukasawa2021rough}.

\section*{AI Disclosure}

Generative AI was used for language improvement and coding assistance.

\clearpage
\bibliographystyle{amsplain}
\bibliography{biblio}

\providecommand{\bysame}{\leavevmode\hbox to3em{\hrulefill}\thinspace}
\providecommand{\MR}{\relax\ifhmode\unskip\space\fi MR }
\providecommand{\MRhref}[2]{%
  \href{http://www.ams.org/mathscinet-getitem?mr=#1}{#2}
}
\providecommand{\href}[2]{#2}
\begin{thebibliography}{10}

\bibitem{abi2019lifting}
Eduardo Abi~Jaber, \emph{Lifting the heston model}, Quantitative finance \textbf{19} (2019), no.~12, 1995--2013.

\bibitem{abi2019multifactor}
Eduardo Abi~Jaber and Omar El~Euch, \emph{Multifactor approximation of rough volatility models}, SIAM journal on financial mathematics \textbf{10} (2019), no.~2, 309--349.

\bibitem{abi2019affine}
Eduardo Abi~Jaber, Martin Larsson, and Sergio Pulido, \emph{Affine volterra processes}, The Annals of Applied Probability \textbf{29} (2019), no.~5, 3155–3200.

\bibitem{acciaio2017semi}
Beatrice Acciaio and Martin Larsson, \emph{Semi-static completeness and robust pricing by informed investors}, The Annals of Applied Probability \textbf{27} (2017), no.~4, 2270--2300.

\bibitem{acciaio2017space}
Beatrice Acciaio, Martin Larsson, and Walter Schachermayer, \emph{The space of outcomes of semi-static trading strategies need not be closed}, Finance and Stochastics \textbf{21} (2017), no.~3, 741--751.

\bibitem{andersen2007moment}
Leif~B.G. Andersen and Vladimir~V. Piterbarg, \emph{{Moment Explosions in Stochastic Volatility Models}}, Finance and Stochastics \textbf{11} (2007), 29--50.

\bibitem{bayer2016pricing}
Christian Bayer, Peter Friz, and Jim Gatheral, \emph{Pricing under rough volatility}, Quantitative Finance \textbf{16} (2016), no.~6, 887--904, DOI: 10.1080/14697688.2015.1099717.

\bibitem{beiglbock2013model}
Mathias Beiglb{\"o}ck, Pierre Henry-Labordère, and Friedrich Penkner, \emph{{Model-independent bounds for option prices: A mass transport approach.}}, Finance and Stochastics \textbf{17} (2013), no.~3, 477--501.

\bibitem{bondi2024rough}
Alessandro Bondi, Sergio Pulido, and Simone Scotti, \emph{The rough hawkes heston stochastic volatility model}, Mathematical Finance \textbf{34} (2024), no.~4, 1197--1241.

\bibitem{callegaro2026semistatic}
Giorgia Callegaro, Paolo Di~Tella, Beatrice Ongarato, and Carlo Sgarra, \emph{Semistatic variance-optimal hedging with self-exciting jumps}, Mathematics of Operations Research \textbf{51} (2026), no.~3, 2495--2524.

\bibitem{callegaro2021fast}
Giorgia Callegaro, Martino Grasselli, and Gilles Pages, \emph{Fast hybrid schemes for fractional riccati equations (rough is not so tough)}, Mathematics of Operations Research \textbf{46} (2021), no.~1, 221--254.

\bibitem{chatziandreou2026pricing}
Konstantinos Chatziandreou and Sven Karbach, \emph{Pricing and semi-static hedging of green pay-as-produced power purchase agreements}, arXiv preprint arXiv:2607.27814 (2026).

\bibitem{chatziandreou2026semi}
\bysame, \emph{Semi-static variance-optimal hedging of covariance risk in multi-asset derivatives}, arXiv preprint arXiv:2603.25320 (2026).

\bibitem{cuchiero2020generalized}
Christa Cuchiero and Josef Teichmann, \emph{Generalized feller processes and markovian lifts of stochastic volterra processes: the affine case}, Journal of evolution equations \textbf{20} (2020), no.~4, 1301--1348.

\bibitem{ditella2019semistatic}
Paolo Di~Tella, Martin Haubold, and Martin Keller-Ressel, \emph{Semi-static variance-optimal hedging in stochastic volatility models with {F}ourier representation}, Journal of Applied Probability \textbf{56} (2019), no.~3, 787--809.

\bibitem{ditella2020semistatic}
\bysame, \emph{Semistatic and sparse variance-optimal hedging}, Mathematical Finance \textbf{30} (2020), no.~2, 403--425.

\bibitem{diethelm2002predictor}
Kai Diethelm, Neville~J Ford, and Alan~D Freed, \emph{A predictor-corrector approach for the numerical solution of fractional differential equations}, Nonlinear Dynamics \textbf{29} (2002), no.~1, 3--22.

\bibitem{diethelm2004detailed}
\bysame, \emph{Detailed error analysis for a fractional adams method}, Numerical algorithms \textbf{36} (2004), no.~1, 31--52.

\bibitem{duffie2003affine}
Darrell Duffie, Damir Filipovi{\'c}, and Walter Schachermayer, \emph{Affine processes and applications in finance}, The Annals of Applied Probability \textbf{13} (2003), no.~3, 984--1053, DOI: 10.1214/aoap/1060202833.

\bibitem{el2019characteristic}
Omar El~Euch and Mathieu Rosenbaum, \emph{The characteristic function of rough heston models}, Mathematical Finance \textbf{29} (2019), no.~1, 3--38.

\bibitem{fukasawa2021rough}
Masaaki Fukasawa and Jim Gatheral, \emph{A rough sabr formula}, Frontiers of Mathematical Finance \textbf{1} (2022), no.~1, 81--97, DOI: 10.3934/fmf.2021003.

\bibitem{follmer1986hedging}
Hans Föllmer and Dieter Sondermann, \emph{Hedging of non-redundant contingent claims}, Contributions to mathematical economics (W.~Hildenbrand and A.~Mas-Colell, eds.), Elsevier Science Publishing, North Holland, The Netherlands, 1986, pp.~205--223.

\bibitem{garrappa2015numerical}
Roberto Garrappa, \emph{Numerical evaluation of two and three parameter mittag-leffler functions}, SIAM Journal on Numerical Analysis \textbf{53} (2015), no.~3, 1350--1369.

\bibitem{gatheral2022volatility}
Jim Gatheral, Thibault Jaisson, and Mathieu Rosenbaum, \emph{Volatility is rough}, Quantitative Finance \textbf{18} (2018), no.~6, 933--949, DOI: 10.1080/14697688.2017.1393551.

\bibitem{gatheral2019rational}
Jim Gatheral and Rado{\v{s}} Radoi{\v{c}}i{\'c}, \emph{Rational approximation of the rough heston solution}, International Journal of Theoretical and Applied Finance \textbf{22} (2019), no.~03, 1950010.

\bibitem{gerhold2019moment}
Stefan Gerhold, Christoph Gerstenecker, and Arpad Pinter, \emph{Moment explosions in the rough heston model}, Decisions in Economics and Finance \textbf{42} (2019), no.~2, 575--608.

\bibitem{gripenberg1990volterra}
Gustaf Gripenberg, Stig-Olof Londen, and Olof Staffans, \emph{Volterra integral and functional equations}, no.~34, Cambridge University Press, 1990.

\bibitem{hubalek2006variance}
Friedrich Hubalek, Jan Kallsen, and Leszek Krawczyk, \emph{Variance-optimal hedging for processes with stationary independent increments}, The Annals of Applied Probability \textbf{16} (2006), no.~2, 853--885.

\bibitem{kallsen2010variance}
Jan Kallsen and Arnd Pauwels, \emph{{Variance-optimal hedging in general affine stochastic volatility models}}, Advances in Applied Probability \textbf{42} (2010), no.~1, 83--105.

\bibitem{keller2020comparison}
Martin Keller-Ressel and Assad Majid, \emph{A comparison principle between rough and non-rough heston models—with applications to the volatility surface}, Quantitative Finance \textbf{20} (2020), no.~6, 919--933.

\bibitem{kunita1967square}
Hiroshi Kunita and Shinzo Watanabe, \emph{{On square integrable martingales}}, Nagoya Mathematical Journal \textbf{30} (1967), 209--245.

\bibitem{larsson2023chapter}
Martin Larsson, Martin Keller-Ressel, and Sergio Pulido, \emph{Chapter 5: Rough affine models}, Rough Volatility, SIAM, 2023, pp.~103--113.

\bibitem{martins2023aspects}
David Pires~Tavares Martins, \emph{Aspects of quadratic utility: mean-variance hedging in rough volatility models, and capm-type equilibria}, Ph.D. thesis, ETH Zurich, 2023.

\bibitem{neuberger1994log}
Anthony Neuberger, \emph{The log contract}, Journal of Portfolio Management \textbf{20} (1994), no.~2, 74--80.

\bibitem{nutz2023limits}
Marcel Nutz, Johannes Wiesel, and Long Zhao, \emph{Limits of semistatic trading strategies}, Mathematical Finance \textbf{33} (2023), no.~1, 185--205, DOI: 10.1111/mafi.12366.

\bibitem{rheinlander2011hedging}
Thorsten Rheinlander and Jenny Sexton, \emph{Hedging derivatives}, Advanced Series on Statistical Science \& Applied Probability, vol.~15, World Scientific Publishing Company, 2011, DOI: 10.1142/8062.

\bibitem{schweizer1984varianten}
Martin Schweizer, \emph{Varianten der black-scholes-formel}, Ph.D. thesis, Diplomarbeit ETHZ, Zurich, 1984.

\bibitem{wang2026variance}
Yingli Wang and Xiaoyu Wang, \emph{Variance-optimal hedging in the rough hawkes--heston model}, arXiv preprint arXiv:2609.08541 (2026).

\bibitem{wiman1905fundamentalsatz}
Anders Wiman, \emph{{\"U}ber den fundamentalsatz in der teorie der funktionen}, Acta Mathematica \textbf{29} (1905), no.~1, 191--201.

\end{thebibliography}

\end{document}